\newcommand{\myparagraph}[1]{\medskip\noindent\textbf{\boldmath #1}}
\newcommand{\mysubparagraph}[1]{\smallskip\noindent{#1}}
\let\doendproof\endproof\renewcommand\endproof{~\hfill$\qed$\doendproof}
\newcommand{\FM}{FM-bigraph\xspace}
\newcommand{\FMs}{FM-bigraphs\xspace}
\newcommand{\FMbend}[1]{\textsc{$#1$-bend FM-bigraph}\xspace}
\newcommand{\FMbends}[1]{\textsc{$#1$-bend FM-bigraphs}\xspace}
\newcommand{\Gcell}{G_\mathsf{c}}
\newcommand{\Gcross}{G_\mathsf{x}}
\newcommand{\Gskel}{G_\mathsf{s}}
\newcommand{\cell}{\textrm{cell}}
\newcommand{\remove}[1]{}
\title{Planar Drawings of Fixed-Mobile Bigraphs\thanks{Research in this work started at the Bertinoro Workshop on Graph Drawing 2016. We thank all the participants and in particular S.-H. Hong for useful discussions. We also thank an anonymous reviewer of this research for some valuable comments, and especially for suggesting the idea behind the proof of Theorem~\ref{thm:skeletonhardness}.}}
\author{M.~A.~Bekos\inst{1}, F.~De~Luca\inst{2}, W.~Didimo\inst{2}, T.~Mchedlidze\inst{3}, M.~N\"ollenburg\inst{4}, A.~Symvonis\inst{5}, I.~G.~Tollis\inst{6}}
\authorrunning{M.~A.~Bekos et al.}
\institute{
University of T\"ubingen, Germany, \email{bekos@informatik.uni-tuebingen.de} \and
Universit\`a degli Studi di Perugia, Italy\\\email{felice.deluca@studenti.unipg.it}, \email{walter.didimo@unipg.it} \and
Karlsruhe Institute of Technology, Germany, \email{mched@iti.uka.de} \and
TU Wien, Vienna, Austria, \email{noellenburg@ac.tuwien.ac.at} \and
National Technical University of Athens, Greece, \email{symvonis@math.ntua.gr} \and
University of Crete, Greece, \email{tollis@csd.uoc.gr}
}
\begin{document}

\maketitle

\begin{abstract}
A \emph{fixed-mobile bigraph} $G$ is a bipartite graph such that the vertices of one partition set are given with \emph{fixed} positions in the plane and the \emph{mobile} vertices of the other part, together with the edges, must be added to the drawing.  We assume that $G$ is planar and study the problem of finding, for a given $k \ge 0$, a planar poly-line drawing of $G$ with at most $k$ bends per edge. In the most general case, we show \NP-hardness. For $k=0$ and under additional constraints on the positions of the fixed or mobile vertices, we either prove that the problem is polynomial-time solvable or prove that it belongs to \NP.  Finally, we present a polynomial-time testing algorithm for a certain type of ``layered'' 1-bend drawings.
\end{abstract}

\section{Introduction}\label{se:introduction}
This paper considers the following problem. Let $G=(V_f,V_m,E)$ be a planar bipartite graph such that the vertices in $V_f$, called \emph{fixed vertices}, have fixed distinct locations (points) in the plane, while the vertices in $V_m$, called \emph{mobile vertices}, can be freely placed. Does $G$ admit a crossing-free drawing~$\Gamma$ with at most $k$ bends per edge, where $k$ is a given non-negative integer? We assume that each vertex of $G$ is drawn in $\Gamma$ as a distinct point of the plane and that each edge is drawn as a simple poly-line. We call $G$ an \emph{\FM} and a drawing $\Gamma$ with the properties mentioned above a \emph{planar $k$-bend drawing} of $G$. In particular, since edge bends negatively affect the readability of a graph layout (see, e.g.,~\cite{DBLP:journals/vlc/Purchase02,DBLP:journals/ese/PurchaseCA02}), we are mainly interested in drawings with small values of $k$, ideally with $k=0$ (i.e., straight-line drawings). We define the \emph{bend number} of $G$ as the minimum value of $k$ for which $G$ admits a planar $k$-bend drawing.

Besides its intrinsic theoretical interest, our problem is motivated by the following practical scenario. Fixed vertices represent geographic locations and each mobile vertex is an attribute of one or more locations. One wants to place each mobile vertex in the plane and connect it to its associated locations, while guaranteeing a ``readable'' layout.  We interpret readability in terms of planarity and small number of bends per edge. Other criteria can also be studied, like for example angular resolution, edge length, drawing area (see, e.g.,~\cite{DBLP:books/ph/BattistaETT99,DBLP:reference/crc/2013gd}).

\myparagraph{Contribution.} We introduce $k$-bend drawings of \FMs with a focus on $k=0$ and $k=1$. Our results are as follows:

\mysubparagraph{$(i)$} We prove that computing the bend number of an \FM $G$ is $\NP$-hard. More generally,
deciding whether $G$ admits a planar $k$-bend drawing is at least as hard as deciding whether an $n$-vertex graph admits a planar embedding with given correspondence (mapping) on a set of $n$ points, such that each edge has at most $2k+1$ bends (Section~\ref{sse:hardness}). If all fixed vertices of $G$ are collinear, the existence of a 0-bend drawing can be tested in linear time.

\mysubparagraph{$(ii)$} Since it is difficult to discretize the problem in the general case~\cite{DBLP:journals/dcg/GoaocKOSSW09,DBLP:journals/ijfcs/Patrignani06}, we investigate the case in which each mobile vertex is restricted to lie in the convex hull of its neighbors. This scenario is reasonable in practice, as the user may expect that each attribute is placed in a sort of ``barycentric'' position with respect to its associated locations. In this setting, we prove that testing the existence of a 0-bend drawing is a problem in {\NP}. With a reduction to a combinatorial problem, which is of its own independent interest (but unfortunately \NP-hard in its general form), we obtain polynomial-time solutions when the intersection graph of the convex hulls is a path, a cycle or, more generally, a cactus (Section~\ref{sse:ch-internal}).

\mysubparagraph{$(iii)$} We finally study 1-bend drawings of \FMs in a convention called \emph{$h$-strip drawing model}, inspired by practical labeling scenarios~\cite{DBLP:journals/jgaa/BekosCFH0NRS15}. All fixed vertices are partitioned into a finite set of horizontal strips and each mobile vertex is placed outside these strips. Edges are not allowed to cross any strip, i.e., to intersect both its top and bottom side; see also Fig.~\ref{fi:strip-model}. For this model we provide polynomial-time testing algorithms (Section~\ref{se:1-bend}).

\myparagraph{Related Work.} Our problem is related to several problems addressed in the literature, but it also has substantial differences from all of them.

\mysubparagraph{\em Point labeling.} A close connection is with the problem of labeling a given set of points in the plane (see, e.g.,~\cite{DBLP:conf/dagstuhl/Neyer99,Wolff-96}), because mobile vertices can be regarded as labels for the fixed vertices (points). Similarly to our setting, the \emph{many-to-one boundary labeling} problem~\cite{DBLP:journals/jgaa/BekosCFH0NRS15,l-cmblwh-10} assumes that each label can have multiple associated vertices and it is visually connected to them by poly-line edges. However, edges can only be drawn as chains of horizontal and vertical segments (which may partially overlap), and the labels are placed outside a single rectangular region that encloses all vertices. Variants of the boundary labeling problem, where each fixed vertex is associated with exactly one label are also studied in the literature (see, e.g.,~\cite{DBLP:journals/comgeo/BekosKSW07,DBLP:journals/algorithmica/KindermannNRS0W16,bgnn-rbl-15}). Note that, in labeling problems, labels are geometric shapes of non-empty area, while we model mobile vertices as points.

\mysubparagraph{\em Partial drawings.} Our problem is a special case of the problem of extending a partial drawing of a (not necessarily bipartite) planar graph $G$ to a planar straight-line drawing of $G$. This problem is $\NP$-hard in general~\cite{DBLP:journals/ijfcs/Patrignani06} and polynomial-time solvable for restricted cases~\cite{t-hdg-63,dgk-pdhgg-11,cegl-dgppo-12,HongN08,MchedlidzeNR16}.

\mysubparagraph{\em Point-set embedding.} In a point-set embedding problem, a planar graph with $n$ vertices must be planarly mapped onto a given set of $n$ points, with or without a predefined correspondence between the vertices and the points (see, e.g.,~\cite{DBLP:journals/tcs/BadentGL08,DBLP:journals/algorithmica/BrandesEEFFGGHKKa11,DBLP:journals/jgaa/GiacomoDLMTW08,DBLP:journals/algorithmica/GiacomoLT10,DBLP:journals/jgaa/KaufmannW02,DBLP:journals/gc/PachW01}). Thus, in all settings of the point-set embedding problem, each vertex can only be mapped to a finite set of points. The results in~\cite{DBLP:journals/tcs/BadentGL08,DBLP:journals/gc/PachW01} imply that any $n$-vertex planar \FM admits a $k$-bend drawing with $k = O(n)$. Indeed,~\cite{DBLP:journals/tcs/BadentGL08,DBLP:journals/gc/PachW01} prove that any $n$-vertex planar graph can be planarly mapped onto any set of $n$ points, with given correspondences, using a linear number of bends per edge (which is also necessary in some cases). Hence, for a given \FM, one can place the mobile vertices anywhere so to realize a planar drawing.

\mysubparagraph{\em Constrained drawings of bipartite graphs.} Misue~\cite{DBLP:journals/ieicet/Misue08} proposed a model and a technique for drawing bipartite graphs such that the vertices of a partition set, called \emph{anchors}, are evenly distributed on a circle. Anchors are similar to fixed vertices in our setting, but the order of the anchors in Misue's model can be freely chosen. Extensions to the 3D space and to semi-bipartite graphs have been subsequently presented~\cite{DBLP:conf/hci/ItoMT09,DBLP:conf/iv/MisueZ11}. Finally, several papers study how to draw a bipartite graph such that the vertices of each partition set are on a line or within a specific plane region (see, e.g.,~\cite{DBLP:conf/compgeom/Biedl98,DBLP:conf/wg/BiedlKM98,DBLP:conf/ciac/FossmeierK97}). In these scenarios, the vertices do not have predefined~locations.

\myparagraph{Notation.}
We assume familiarity with graph theory (see, e.g.,~\cite{h-gt-72}). For standard definitions on \emph{planar graphs} and \emph{drawings}, we point the reader to~\cite{DBLP:conf/dagstuhl/1999dg,DBLP:reference/cg/TamassiaL04}.

We denote an \FM by a pair $\langle G, \phi \rangle$, where $G=(V_f,V_m,E)$ is a bipartite graph and $\phi : V_f \rightarrow \mathbb{R}^2$ is a function that maps each vertex $v \in V_f$ to a distinct point $p_v=\phi(v)$.    
A \emph{$k$-bend drawing} of $\langle G, \phi \rangle$ is a $k$-bend drawing of $G$ such that each vertex $v \in V_f$ is mapped to $\phi(v)$. 
In order to study the complexity of computing the bend number of planar \FMs, we introduce the \FMbend{k} decision problem: \emph{Given a planar \FM $\langle G, \phi \rangle$ and a non-negative integer $k$, is there a planar $k$-bend drawing of $\langle G, \phi \rangle$?}

From now on, we assume that $G$ is planar. Also, we let $n_f=|V_f|$, $n_m=|V_m|$, and $n=n_f+n_m$.
Some proofs are moved to the appendix.


\section{Straight-line Planar Drawings of \FMs}\label{se:spd}
We show that the \FMbend{0} problem is $\NP$-hard (Theorem~\ref{th:hardness}), which implies that it is $\NP$-hard to compute the bend number of a planar \FM. A simple linear-time testing algorithm is given when all fixed vertices are collinear (Theorem~\ref{th:collinear}). If each mobile vertex must be placed inside the convex hull of its neighbors, then the \FMbend{0} problem belongs to $\NP$ (Theorem~\ref{th:np}), and it becomes polynomial-time solvable if the intersection graph of the convex hulls is a cactus (Theorem~\ref{thm:cacti}).

\subsection{$\NP$-hardness and Collinear Fixed Vertices}\label{sse:hardness}
To prove that \FMbend{0} is $\NP$-hard we use a reduction from the 1-bend point set embeddability with correspondence problem (or 1-BPSEWC, for short), which has been proven to be $\NP$-hard by Goaoc \emph{et al.}~\cite{DBLP:journals/dcg/GoaocKOSSW09}. Problem~1-BPSEWC is defined as follows: \emph{Given a planar graph $G=(V,E)$, a set $S$ of $|V|$ points in the plane, and a one-to-one correspondence $\zeta$ between $V$ and $S$, is there a planar $1$-bend drawing of $G$ such that each vertex $v$ is mapped to point $\zeta(v)$?}

\begin{theorem}\label{th:hardness}
The \FMbend{0} problem is $\NP$-hard, even if each mobile vertex has degree at most two.
\end{theorem}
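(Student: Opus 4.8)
The plan is to reduce from 1-BPSEWC to \FMbend{0}. Given an instance of 1-BPSEWC consisting of a planar graph $G=(V,E)$, a point set $S$ with $|S|=|V|$, and a correspondence $\zeta : V \to S$, I would build an \FM $\langle G', \phi \rangle$ such that $G'$ admits a planar straight-line (0-bend) drawing respecting $\phi$ if and only if $G$ admits a planar 1-bend drawing respecting $\zeta$. The first step is to make $G'$ bipartite: I would subdivide every edge $uv\in E$ of $G$ exactly once, introducing a \emph{subdivision vertex} $w_{uv}$. This immediately yields a bipartite graph whose one side consists of the original vertices of $V$ and whose other side consists of the subdivision vertices; moreover, every subdivision vertex has degree exactly two, which is what gives the ``degree at most two'' strengthening in the theorem statement.

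The second step is to decide which vertices are fixed and which are mobile. The natural choice is to let $V_f = V$ with $\phi(v) = \zeta(v)$ for each $v\in V$, and let $V_m$ be the set of subdivision vertices, which may be placed anywhere in the plane. Then a planar straight-line drawing of this \FM places each $v\in V$ at $\zeta(v)$ and routes each ``edge'' $uv$ of $G$ as a two-segment polyline through the freely-chosen point $\phi(w_{uv}) = w_{uv}$ --- that is, exactly as a 1-bend edge of $G$ drawn on $S$ with correspondence $\zeta$. Conversely, a planar 1-bend drawing of $G$ on $S$ gives, by placing each subdivision vertex at the bend point of the corresponding edge, a planar straight-line drawing of $\langle G', \phi\rangle$. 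So the equivalence is essentially immediate from the subdivision construction, and the reduction is clearly polynomial.

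The main obstacle --- and the one subtle point that needs care --- is that a 1-bend drawing of $G$ in the 1-BPSEWC sense allows an edge to be drawn with a \emph{straight} segment (zero bends), whereas placing a subdivision vertex on the segment between its two neighbors is fine for planarity but the subdivision vertex must still be a \emph{distinct} point from the two fixed endpoints and, more importantly, distinct from all other vertices and not lying on any other edge. One must check that a valid 1-bend drawing of $G$ can always be perturbed so that every bend point (whether a ``real'' bend or a point chosen on a straight edge) is in general position with respect to the rest of the drawing, and symmetrically that in the \FM drawing one may assume no mobile vertex coincides with a fixed vertex or lies on another edge; both follow from a standard small-perturbation argument since the drawings are piecewise linear and finite. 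A second minor point to address is that $S$ may contain points that are collinear or otherwise ``degenerate'' with respect to $G'$, but since $\phi$ simply copies $\zeta$ and the hardness of 1-BPSEWC already holds for arbitrary point sets, this requires no extra work.

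With these observations in place, \FMbend{0} is \NP-hard, and since the only mobile vertices introduced are degree-two subdivision vertices, the hardness persists under the restriction that every mobile vertex has degree at most two, proving the theorem. As a byproduct, replacing each edge of $G$ by a path of $2k{+}1$ subdivision vertices instead of a single one shows, more generally, that \FMbend{k} is at least as hard as deciding whether an $n$-vertex planar graph admits a point-set embedding with given correspondence and at most $2k{+}1$ bends per edge, which is the more general statement claimed in the contribution section.
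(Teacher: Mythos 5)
Your reduction is exactly the one the paper uses: subdivide each edge of the 1-BPSEWC instance with a single degree-two mobile vertex, fix $V_f=V$ with $\phi=\zeta$, and identify the mobile vertex's position with the bend (or any interior point of a straight edge), so the proof is correct and essentially identical to the paper's. One small remark on your closing aside: for the $k$-bend generalization the paper applies the \emph{same} reduction with no change (the $2k+1$ bends arise as $k$ bends on each of the two half-edges plus the mobile vertex itself), whereas your suggestion of a path of $2k+1$ subdivision vertices would make adjacent mobile vertices and thus break the fixed/mobile bipartition.
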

\begin{proof}
Let $\langle G=(V,E),S,\zeta \rangle$ be an instance of 1-BPSEWC.
Construct (in linear time) an instance $\langle G'=(V_f,V_m,E'), \phi \rangle$ of \FMbend{0} as follows: Let $V_f = V$ and $\phi=\zeta$; for each edge $e=(u,v) \in E$, define a corresponding vertex $w_e \in V_m$ and two edges $(w_e,u)$, $(w_e,v)$ in $E'$. Clearly, $G$ has a 1-bend drawing $\Gamma$ that respects $\zeta$ if and only if $G'$ has a planar 0-bend drawing $\Gamma'$ that respects $\phi$: The position of a bend along an edge $e=(u,v)$ of $\Gamma$ corresponds to the positions of the mobile vertex $w_e$ in $\Gamma'$; if $e$ has no bend, $w_e$ is drawn anywhere along segment $\overline{uv}$.
\end{proof}

The reduction in Theorem~\ref{th:hardness} can be applied with no change to prove that, for any $k \geq 0$, problem \FMbend{k} is at least as difficult as problem $(2k+1)$-BPSEWC, which allows up to $(2k+1)$ bends per edge.

\begin{theorem}\label{th:k-bend-complex}
The \FMbend{k} problem is  at least as hard as the \mbox{$(2k+1)$-BPSEWC} problem, for any $k\geq 0$.
\end{theorem}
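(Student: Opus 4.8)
The plan is to observe that Theorem~\ref{th:k-bend-complex} is essentially a ``parameterized'' version of the reduction already used to prove Theorem~\ref{th:hardness}, and so the proof is a direct generalization of that construction. I would start from an arbitrary instance $\langle G=(V,E),S,\zeta\rangle$ of $(2k+1)$-BPSEWC and build, in linear time, an instance $\langle G'=(V_f,V_m,E'),\phi\rangle$ of \FMbend{k}. As before, set $V_f=V$ and $\phi=\zeta$, so that the fixed vertices occupy exactly the points of $S$ with the prescribed correspondence. The only change is that each original edge $e=(u,v)\in E$ is now replaced by a \emph{path} of length $2$ through a single mobile vertex $w_e$: we add $w_e$ to $V_m$ and the two edges $(w_e,u),(w_e,v)$ to $E'$, exactly as in Theorem~\ref{th:hardness}. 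Note $G'$ is planar and bipartite (each $w_e$ has degree $2$), so $\langle G',\phi\rangle$ is a legitimate \FM instance.

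The core of the argument is the equivalence: $G$ admits a planar $(2k+1)$-bend drawing respecting $\zeta$ if and only if $\langle G',\phi\rangle$ admits a planar $k$-bend drawing. For the forward direction, take such a drawing $\Gamma$ of $G$; each edge $e=(u,v)$ is a simple poly-line with at most $2k+1$ bends $b_1,\dots,b_t$ ($t\le 2k+1$), ordered from $u$ to $v$. Split this polyline at a ``middle'' bend: if $t\ge 1$, place $w_e$ at bend $b_{\lceil t/2\rceil}$, which leaves at most $\lceil t/2\rceil-1\le k$ bends on the $u$-side sub-polyline and at most $t-\lceil t/2\rceil\le k$ bends on the $v$-side sub-polyline; if $t=0$, place $w_e$ anywhere in the relative interior of the segment $\overline{uv}$. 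Routing $(w_e,u)$ and $(w_e,v)$ along these two sub-polylines gives a drawing $\Gamma'$ of $G'$ with at most $k$ bends per edge; since the sub-polylines are pieces of the original pairwise-noncrossing curves of $\Gamma$ and the mobile vertices sit on distinct interior points, $\Gamma'$ is planar and respects $\phi$. Conversely, from a planar $k$-bend drawing $\Gamma'$ of $\langle G',\phi\rangle$, concatenate for each $e$ the two poly-lines of $(w_e,u)$ and $(w_e,v)$ at the point $w_e$; this yields a simple poly-line from $u$ to $v$ with at most $k+1+k=2k+1$ bends (counting $w_e$ itself as a bend), and planarity is inherited since we only merged curves at their shared endpoint $w_e$. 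Hence the two instances are equivalent, and since $(2k+1)$-BPSEWC reduces to \FMbend{k} in polynomial time, the claim follows.

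The main point requiring a little care — though it is not a genuine obstacle — is the bend accounting in the forward direction when $t$ is odd versus even, and the degenerate case $t=0$; one must check that splitting at bend $b_{\lceil t/2\rceil}$ never leaves more than $k$ bends on either side for all $t\in\{0,1,\dots,2k+1\}$, which is immediate from $\lceil t/2\rceil-1\le k$ and $\lfloor t/2\rfloor\le k$. A secondary subtlety is ensuring that $\Gamma'$ uses \emph{distinct} points for all vertices: the mobile vertices land on interior points of distinct edges of $\Gamma$ (or on open segments $\overline{uv}$), and no such point coincides with a fixed vertex because $\Gamma$ is a valid drawing; for the finitely many edges with $t=0$ we simply pick the placement of $w_e$ to avoid the finitely many already-used points. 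Everything else is a verbatim repetition of the proof of Theorem~\ref{th:hardness}, so the theorem follows.

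\begin{proof}
This follows by applying the reduction of Theorem~\ref{th:hardness} essentially verbatim. Given an instance $\langle G=(V,E),S,\zeta\rangle$ of $(2k+1)$-BPSEWC, construct in linear time the \FM instance $\langle G'=(V_f,V_m,E'),\phi\rangle$ with $V_f=V$, $\phi=\zeta$, and, for each $e=(u,v)\in E$, a mobile vertex $w_e\in V_m$ with edges $(w_e,u),(w_e,v)\in E'$. We claim $G$ has a planar $(2k+1)$-bend drawing respecting $\zeta$ iff $\langle G',\phi\rangle$ has a planar $k$-bend drawing.

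$(\Rightarrow)$ Let $\Gamma$ be such a drawing of $G$. For $e=(u,v)\in E$ with bends $b_1,\dots,b_t$ along the poly-line from $u$ to $v$ (where $0\le t\le 2k+1$): if $t\ge 1$, place $w_e$ at $b_{\lceil t/2\rceil}$; otherwise place $w_e$ at an arbitrary interior point of $\overline{uv}$ distinct from all previously placed vertices. Route $(w_e,u)$ and $(w_e,v)$ along the two sub-polylines of $e$ split at $w_e$. The $u$-side carries at most $\lceil t/2\rceil-1\le k$ bends and the $v$-side at most $\lfloor t/2\rfloor\le k$ bends, so the resulting drawing $\Gamma'$ of $G'$ is a planar $k$-bend drawing respecting $\phi$.

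$(\Leftarrow)$ Given a planar $k$-bend drawing $\Gamma'$ of $\langle G',\phi\rangle$, for each $e=(u,v)\in E$ concatenate the poly-lines of $(w_e,u)$ and $(w_e,v)$ at $w_e$, treating $w_e$ as a bend. This yields a simple poly-line from $u$ to $v$ with at most $k+1+k=2k+1$ bends, and merging curves only at their common endpoint preserves planarity. We thus obtain a planar $(2k+1)$-bend drawing of $G$ respecting $\zeta=\phi$.

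Since this reduction is polynomial and $(2k+1)$-BPSEWC reduces to \FMbend{k}, the claim follows.
\end{proof}
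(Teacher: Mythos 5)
Your proof is correct and follows exactly the route the paper intends: the paper simply remarks that the reduction of Theorem~\ref{th:hardness} ``can be applied with no change,'' and your argument is that same reduction with the bend accounting ($\lceil t/2\rceil-1\le k$ and $\lfloor t/2\rfloor\le k$ forward, $k+1+k=2k+1$ backward) spelled out explicitly. No discrepancies; you have in fact supplied more detail than the paper does.
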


When all fixed vertices of an \FM $\langle G, \phi \rangle$ are~collinear, it can be checked in linear time whether $\langle G, \phi \rangle$ admits a planar 0-bend~drawing.

\newcommand{\collinear}{Let $\langle G=(V_f,V_m,E),\phi \rangle$ be an $n$-vertex \FM such that all vertices of $V_f$ are collinear. There exists an $O(n)$-time algorithm that tests whether $\langle G, \phi \rangle$ admits a planar 0-bend drawing.}
\wormhole{collinear}
\begin{theorem}\label{th:collinear}
\collinear
\end{theorem}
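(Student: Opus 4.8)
The plan is to exploit the collinearity of the fixed vertices to reduce the problem to a one-dimensional consistency check. Assume without loss of generality that all vertices of $V_f$ lie on a horizontal line $\ell$, and let $p_1, \dots, p_{n_f}$ be their positions sorted from left to right. In a planar $0$-bend drawing, every mobile vertex $w$ lies either strictly above $\ell$ or strictly below $\ell$ (it cannot lie on $\ell$, since then one of its incident straight-line edges would overlap $\ell$ and cross other fixed vertices or edges, unless $\deg(w)\le 2$ and the two edges go in opposite horizontal directions — a degenerate case that one handles separately). So the first step is to guess, for each connected component of $G$, on which side of $\ell$ its mobile vertices go; in fact each component of $G$ is drawn entirely in the closed upper or closed lower half-plane, since an edge from a mobile vertex on one side to a fixed vertex on $\ell$ cannot reach the other side without crossing $\ell$. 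Thus the two sides are independent, and it suffices to describe a linear-time test for drawing a given set of components in the (open) upper half-plane.

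For the upper half-plane, the key observation is that the drawing is essentially an outerplanar-type layout: a mobile vertex $w$ adjacent to fixed vertices $p_{i_1}, \dots, p_{i_d}$ (with $i_1 < \dots < i_d$) together with the segment of $\ell$ between $p_{i_1}$ and $p_{i_d}$ encloses a region, and the straight segments $\overline{w p_{i_j}}$ split this region into $d-1$ triangular faces; any fixed vertex $p_t$ with $i_1 < t < i_d$ that is used by some other mobile vertex on the same side must be dealt with, and any mobile vertex placed ``under'' the fan of $w$ creates a nesting. So the combinatorial structure we must realize is: for each component, a left-to-right sequence on $\ell$ of its fixed vertices, and for the mobile vertices, a laminar (nesting) family of intervals $[\min N(w), \max N(w)]$ on $\ell$ — two such intervals must be either disjoint or nested, otherwise the corresponding edges are forced to cross. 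Concretely, I would (1) check that within each component the intervals $[\min N(w), \max N(w)]$ over $w \in V_m$ form a laminar family and that the fixed vertices inside each such interval that also belong to the component appear in a way compatible with a planar fan from $w$; (2) check that distinct components assigned to the same side occupy disjoint or nested ranges on $\ell$ consistently; (3) observe that once the laminar/ordering constraints are satisfied, an actual planar straight-line placement exists — one can place the mobile vertices at heights increasing with the nesting depth and at $x$-coordinate equal to, say, the midpoint of their interval, perturbing as needed, and a short geometric argument shows no two edges cross. Each of these checks runs in $O(n)$ time using the sorted order of $V_f$ (which, since the points are given, can be obtained in $O(n)$ time via the linear order on $\ell$, or we simply assume the input provides it; in any case radix/bucket sort on the projection onto $\ell$ suffices for $O(n)$).

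The main obstacle I anticipate is proving sufficiency of the purely combinatorial conditions, i.e., that a valid laminar interval structure can always be turned into an actual crossing-free straight-line drawing with the fixed vertices at their prescribed positions. The subtlety is that the positions on $\ell$ are fixed and not under our control, so the heights and $x$-coordinates of the mobile vertices must be chosen carefully so that the segments from a deeply nested mobile vertex do not poke through the fan of an enclosing one; I expect this to follow from choosing the mobile vertices along a steep enough ``staircase'' (large vertical separation between nesting levels relative to horizontal extent) and invoking convexity of the triangular faces, but making this fully rigorous — including the degenerate cases of degree-$\le 2$ mobile vertices that may legitimately sit on $\ell$ — is the delicate part. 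I would also need to be careful that the laminar condition is genuinely necessary: two mobile vertices with properly crossing intervals always force an edge crossing on the same side of $\ell$, which is a small case analysis. Once these pieces are in place, the algorithm is: sort $V_f$ along $\ell$; for each component build its interval set; test laminarity and the inter-component ordering on each side with a linear scan; if all tests pass, output ``yes'' (and, if desired, construct the staircase drawing) — total time $O(n)$.
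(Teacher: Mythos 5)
Your overall strategy---reduce to a one-dimensional/combinatorial consistency check using the order of $V_f$ along $\ell$---is in the right spirit, but two of your structural claims are false, and they make the proposed algorithm incorrect rather than merely incomplete. First, a connected component of $G$ is \emph{not} drawn entirely in one closed half-plane: mobile vertices of the same component attach to fixed vertices lying \emph{on} $\ell$, so one mobile vertex can sit above $\ell$ and another below, with the component ``crossing'' the line through its fixed vertices. Second, and consequently, laminarity of the intervals $[\min N(w),\max N(w)]$ within a component is not a necessary condition. Concretely, take fixed vertices $a<b<c<d$ on $\ell$, a mobile vertex $w_1$ adjacent to $a,b,c$ and a mobile vertex $w_2$ adjacent to $b,d$. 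The graph is connected (through $b$), the intervals $[a,c]$ and $[b,d]$ properly cross, so your test (1) rejects; yet placing $w_1$ above $\ell$ and $w_2$ below $\ell$ yields a planar straight-line drawing. The real content of the problem is precisely the \emph{global} assignment of individual mobile vertices (not components) to the two sides of $\ell$, subject to the pairwise conflicts you mention; your proposal never actually solves this assignment problem, it only observes that crossing intervals force opposite sides.

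The paper's proof handles both difficulties at once with a cleaner reduction: augment $G$ with a cycle through all fixed vertices in their order along $\ell$ and test planarity of the resulting graph $G'$; a mobile vertex embedded inside (respectively outside) the cycle corresponds to a placement above (respectively below) $\ell$, and a single linear-time planarity test decides the existence of a consistent two-sided assignment. If you want to salvage your route, you would have to replace the per-component guess by a genuine two-coloring/planarity argument for the side assignment, and separately prove that on each fixed side the remaining condition is laminarity-plus-fan-compatibility and that it is geometrically realizable with the prescribed fixed positions (your own step (3), which you correctly flag as delicate). At that point you would essentially be re-proving the correctness of the cycle-augmentation reduction by hand.
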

\begin{proof}[sketch]
Let $\ell$ be the line passing through all fixed vertices. Deciding whether $\langle G, \phi \rangle$ has a planar 0-bend drawing coincides with testing the planarity of a graph obtained by augmenting $G$ with a cycle that connects all fixed vertices in the order they appear along $\ell$.
\end{proof}

\subsection{Mobile Vertices at Internal Positions}\label{sse:ch-internal}
We now focus on \emph{convex-hull drawings}, in which all fixed vertices are in general position and each vertex $u_m \in V_m$ lies in the convex hull of its neighbors. With slight abuse of notation, we denote by $CH(u_m)$ the convex hull of the neighbors of $u_m$. Let $\mathcal A = \mathcal A(V_f)$ be the arrangement of lines defined by all pairs of fixed points; see Fig.~\ref{fig:graphs}(a). $\mathcal A$ has $O(n_f^2)$ lines and $O(n_f^4)$ cells~\cite{h-a-04}. Lemma~\ref{lem:cells_equiv} allows us to discretize the set of possible positions for the mobile vertices; it implies that all positions of $u_m$ in the same cell of $\mathcal A$ within $CH(u_m)$ are equivalent for a planar 0-bend drawing of $\langle G, \phi \rangle$.

\newcommand{\cellsequiv}{Let $\langle G=(V_f,V_m,E), \phi \rangle$ be an \FM, $u_m \in V_m$, and $C$ a cell of $\mathcal A=A(V_f)$ inside $CH(u_m)$. Let also $p$ and  $p'$ be two points in~$C$. Suppose that $\Gamma$ is a $0$-bend drawing of $\langle G, \phi \rangle$ where $u_m$ is at point $p$ and let $\Gamma'$ be a $0$-bend drawing of $\langle G, \phi \rangle$ obtained from $\Gamma$ by only moving $u_m$ from point $p$ to point $p'$. Then $\Gamma'$ is planar if and only if $\Gamma$ is planar.}
\wormhole{cellequiv}
\begin{lemma}\label{lem:cells_equiv}
	\cellsequiv
\end{lemma}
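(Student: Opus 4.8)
The statement says that moving a single mobile vertex $u_m$ within one cell $C$ of the line arrangement $\mathcal{A}(V_f)$ (and staying inside $CH(u_m)$) preserves planarity. The key observation is that the only edges of the drawing that change when we move $u_m$ are the edges incident to $u_m$; all other vertex positions and all other edges (straight-line segments between fixed vertices, or between fixed vertices and other mobile vertices) are untouched. So I would argue that no crossing can be created or destroyed during a continuous motion of $u_m$ from $p$ to $p'$ that stays inside $C$.

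**Key steps.** First, fix a continuous path $\pi$ from $p$ to $p'$ inside the (convex, open) cell $C$ — e.g.\ the segment $\overline{pp'}$, which lies in $C$ since cells of a line arrangement are convex. As $u_m$ slides along $\pi$, consider each edge $(u_m, x)$ incident to $u_m$, where $x$ is a fixed vertex. I claim the combinatorial position of the segment $u_m x$ relative to every other element of the drawing does not change. The crucial point: for any fixed vertex $x$ and any point $q$, the sidedness of $q$ with respect to the line through a pair of fixed points $\{a,b\}$ — and more to the point, whether the open segment $u_m x$ crosses the open segment $ab$ (or $ax'$, for another neighbor/edge) — is determined by a constant number of orientation predicates, each of which is a function of $u_m$ and the fixed points only through which side of certain lines of $\mathcal{A}$ the point $u_m$ lies on. Since $p,p'$ (and all of $\pi$) are in the same cell $C$, $u_m$ stays on the same side of every line of $\mathcal{A}$ throughout, so every such predicate is invariant. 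Second, I need to handle potential crossings between two edges $u_m x$ and $u_m x'$ both incident to $u_m$: these two edges share the endpoint $u_m$, and since $u_m \in CH(u_m) \subseteq$ convex position of its neighbors, and $u_m$ is never collinear with two of its neighbors (it stays inside an open cell, hence off every line of $\mathcal{A}$, in particular off line $xx'$), the segments $u_m x$ and $u_m x'$ meet only at $u_m$ — so no crossing there, in $\Gamma$ or $\Gamma'$. Third, crossings between an edge $u_m x$ and an edge $yz$ with $y,z \neq u_m$: here both $y,z$ are fixed (since the graph is bipartite and $u_m$'s neighbors are fixed, so edges not incident to $u_m$ whose endpoint set could interfere are either fixed-fixed or involve another mobile vertex whose position is frozen). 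Whether the open segment $u_m x$ meets the segment $yz$ is decided by the orientations of the triples $(u_m, x, y)$, $(u_m, x, z)$, $(y, z, u_m)$, $(y, z, x)$; the last is constant, and the first three flip only when $u_m$ crosses the line through $\{x,y\}$, $\{x,z\}$, or $\{y,z\}$ respectively — all of which are lines of $\mathcal{A}$. Hence these predicates, too, are constant along $\pi$.

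**Main obstacle.** The delicate case is when the other endpoint $x$ of an incident edge is itself a mobile vertex, or more generally when an edge $yz$ in the "rest" of the drawing has a mobile endpoint — but in this \FM setting, edges go between $V_f$ and $V_m$ (bipartite), $u_m$'s neighbors lie in $V_f$, and every mobile vertex other than $u_m$ is held at a fixed position when passing from $\Gamma$ to $\Gamma'$; so in fact all relevant "other" points $y,z,x$ are points whose coordinates do not move, and only $u_m$ moves. One subtlety to record carefully: when $u_m$ slides along $\pi$, the segment $u_m x$ for a neighbor $x$ never degenerates and never passes through a third point of the configuration — because passing through a third point would require $u_m$ to cross a line of $\mathcal{A}$, which $\pi \subseteq C$ forbids. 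A final point worth a sentence: since $p, p' \in CH(u_m)$ and $CH(u_m)$ is convex, the whole segment $\overline{pp'}$ lies in $CH(u_m)$, so $\Gamma'$ (and every intermediate drawing) is a legitimate convex-hull drawing; but actually for the planarity equivalence we need only the cell-invariance argument, and the ``$C$ inside $CH(u_m)$'' hypothesis just guarantees $p'$ is an admissible position.

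I would therefore structure the proof as: (1) reduce to a straight-line motion inside the convex open cell $C$; (2) enumerate the three types of potential crossings involving the moved edges; (3) in each type, express the existence of a crossing via a Boolean combination of $\mathrm{orient}(\cdot,\cdot,\cdot)$ predicates; (4) observe each such predicate is a function of which side of lines of $\mathcal{A}$ the point $u_m$ occupies, hence constant on $C$; conclude $\Gamma$ and $\Gamma'$ have exactly the same set of crossings, so one is planar iff the other is. The step I expect to be most error-prone is getting the case analysis of crossing predicates complete and correct (in particular, not overlooking the ``segment passes through a vertex'' degeneracy and the shared-endpoint pair $u_m x, u_m x'$), rather than anything deep.
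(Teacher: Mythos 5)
There is a genuine gap, and it sits exactly at the point you flagged as ``delicate'' and then dismissed. Your argument reduces planarity-invariance to the claim that every crossing predicate involving a moved edge $\overline{u_m x}$ is governed by orientation tests that can only flip when $u_m$ crosses a line of $\mathcal{A}$. That is true when all points involved are fixed vertices, but the other edge $\overline{yz}$ always has a mobile endpoint (the graph is bipartite, so every edge joins $V_f$ to $V_m$), say $y = w_m \in V_m$. The lines through $\{x, w_m\}$ and through $\{w_m, z\}$ are \emph{not} lines of $\mathcal{A} = \mathcal{A}(V_f)$, which is built only from pairs of \emph{fixed} points. The fact that $w_m$ does not move is beside the point: what matters is that $u_m$ can cross a line spanned by $w_m$ and a fixed point while staying inside a single cell $C$, and when it does, the orientation predicates $(u_m, x, w_m)$ and $(w_m, z, u_m)$ flip. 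Consequently your conclusion that ``$\Gamma$ and $\Gamma'$ have exactly the same set of crossings'' is false in general: a crossing between $\overline{u_m u_f}$ and $\overline{w_m w_f}$ can genuinely appear or disappear while $u_m$ stays in one cell. The same issue invalidates your claim that $\overline{u_m x}$ never sweeps through a third point of the configuration.

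The lemma is nevertheless true, but for a different reason, and this is where the convex-hull hypothesis --- which your proof never uses in the crossing analysis --- is essential. The paper argues by contradiction along a trajectory inside $C$, locates the first parameter at which a crossing of $(u_m,u_f)$ with $(w_m,w_f)$ appears, and distinguishes which critical curve the trajectory crossed just before that event: the line $\ell(u_f,w_f)$ (both endpoints fixed, hence in $\mathcal{A}$ --- impossible inside a cell; this is the only case your argument covers), the line $\ell(u_f,w_m)$, or the supporting line of the edge $(w_m,w_f)$. In the latter two cases the crossing predicate really does change, but because $w_m$ (respectively $u_m$) lies in the convex hull of its neighbors, there is another incident edge $(w_m,w_f')$ (respectively $(u_m,u_f')$) on the far side of the critical line that must already have been crossed slightly earlier, contradicting the choice of the first crossing event. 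In short, the correct invariant is not the crossing set but non-planarity itself, and establishing it requires the convex-hull condition on all mobile vertices, not only the admissibility of the new position of $u_m$.
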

\begin{proof}
Suppose by contradiction that $\Gamma'$ is planar and $\Gamma$ is not (the proof for the other direction is symmetric). This implies that while moving $u_m$ along some trajectory $T$ from $p'$ to $p$ inside cell $C$, at some point we get a crossing along one of the edges incident to $u_m$. Let $r$ be the point of $T$ closest to $p'$, such  that placing $u_m$ at $r$ causes such a crossing (i.e., placing $u_m$ on any point between $r$ and $p'$ implies no crossing).
Let $(u_m,u_f)$ be an edge crossed by some other edge $(w_m,w_f)$, when $u_m$ is placed at $r$. Assume w.l.o.g. that $w_m \in V_m$ and $w_f \in V_f$, and that $u_m$ lies to the right of the oriented edge $(w_m,w_f)$; see Fig.~\ref{fi:crossing}. Denote by $\ell(u_f,w_f)$ the line through $u_f$ and $w_f$ and by $\ell(u_f,w_m)$ the line through $u_f$ and $w_m$. Let $\cal R$ be the region delimited by lines $\ell(u_f,w_f)$, $\ell(u_f,w_m)$ and edge $(w_m,w_f)$ that contains $u_m$. Let $r'$ be a point of $T$ lying between $r$ and $p'$. Notice that $r'$ has to lie outside of $\cal R$. Thus $T$ crosses the border of $\cal R$. Let us additionally assume that $r'$ lies arbitrarily close to the border of $\cal R$. We distinguish three cases, based on whether $T$ crosses $\ell(u_f,w_f)$, $\ell(u_f,w_m)$, or edge $(w_m,w_f)$.

\begin{figure}[t]
	\centering
	\subfigure[]{\label{fi:crossing}\includegraphics[scale=0.8]{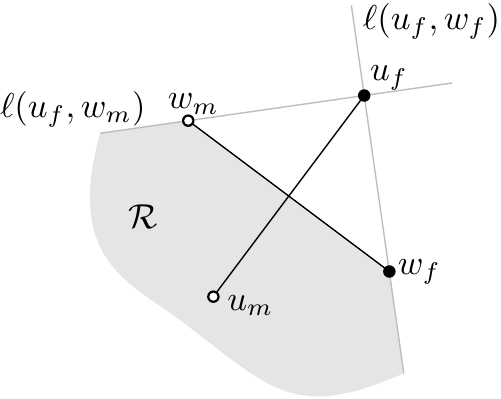}} 
	\hfil
	\subfigure[]{\label{fi:crossing_over}\includegraphics[scale=0.8]{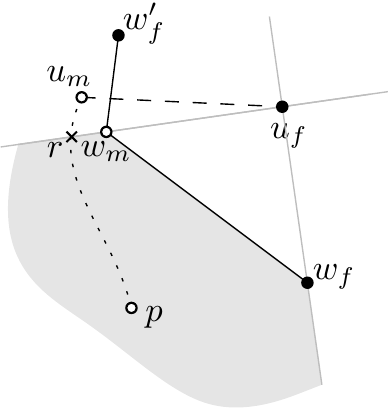}}
	\hfil
	\subfigure[]{\label{fi:crossing_through}\includegraphics[scale=0.8]{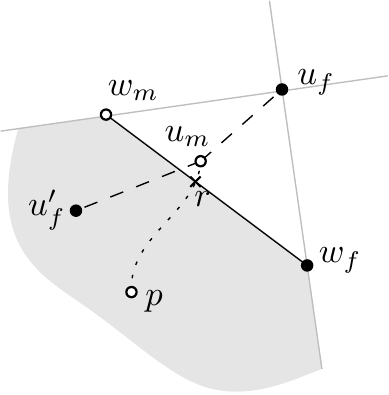}}
	\caption{Illustration of the proof of Lemma~\ref{lem:cells_equiv}. (b-c) Part of trajectory $T$ between $r$ and $r'$ is shown by a dotted line.}
\end{figure}

\smallskip\noindent{\bf Case 1.} $T$ crosses $\ell(u_f,w_f)$. Since line $\ell(u_f,w_f)$ is part of $\mathcal A$, $r'$ lies outside $C$. This is a contradiction to the assumption that $T$ lies in $C$.

\smallskip\noindent{\bf Case 2.} $T$ crosses $\ell(u_f,w_m)$; see Fig.~\ref{fi:crossing_over}. Since $w_m$ is in the convex hull of its neighbors, there is an edge $(w_m,w_f')$ with $w_f$ and $w_f'$ on different sides of $\ell(u_f,w_m)$ (if not, the crossing is resolved). Placing $u_m$ at $r'$ yields a crossing with $(w_m,w_f')$, as $r'$ is arbitrarily close to $\ell(u_f,w_m)$; a contradiction to the choice of~$r$.

\smallskip\noindent{\bf Case 3.} $T$ crosses $(w_m,w_f)$. Since $u_m$ lies in the convex hull of its neighbors, there is an edge $(u_m,u_f')$, where $u_f'$ and $u_f$ lie on different sides of the line through edge $(w_f,w_m)$; see Fig.~\ref{fi:crossing_through}. Placing $u_m$ at $r'$ would introduce a crossing between $(u_m,u_f')$ and $(w_m,w_f)$, as $r'$ lies arbitrarily close to $(w_m,w_f)$. This again contradicts the choice of $r$.
\end{proof}

Lemma~\ref{lem:cells_equiv} implies that, the \FMbend{0} problem belongs to $\NP$ for convex-hull drawings\footnote{We remark that in a preliminary version of~\cite{DBLP:journals/ijfcs/Patrignani06}, it is claimed membership in $\NP$ for the partial planarity extension problem~\cite{Patrignani06}, which would imply membership in $\NP$ also for our problem. That claim, however, lacks a proof in~\cite{Patrignani06} and the author was only able to prove the $\NP$-hardness of the problem in~\cite{DBLP:journals/ijfcs/Patrignani06} (personal communication).}. A non-deterministic algorithm guesses an assignments of the mobile vertices to the $O(n_f^4)$ cells and, since $G$ is planar, checks in $O(n_f^2)$ time whether the corresponding 0-bend drawing is planar (note that $n_m=O(n_f)$). We summarize this observation in the following theorem.

\begin{theorem}\label{th:np}
The \FMbend{0} problem belongs to $\NP$ if each mobile vertex must lie in the convex hull of its neighbors.
\end{theorem}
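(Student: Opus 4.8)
The plan is to combine Lemma~\ref{lem:cells_equiv} with a routine guess-and-check argument. By that lemma, the planarity of a planar convex-hull drawing of $\langle G,\phi\rangle$ depends on each mobile vertex $u_m$ only through the cell of the line arrangement $\mathcal A=\mathcal A(V_f)$ in which $u_m$ lies --- reading ``cell'' as a face of $\mathcal A$ of any dimension (its proof adapts, with only minor changes, to lower-dimensional cells; for a degree-$2$ mobile vertex, which is confined to the segment spanned by its two neighbors, the relevant cells are one-dimensional). Since $\mathcal A$ has $O(n_f^4)$ cells, the effective search space for the mobile vertices is finite and polynomially describable: specifying one cell per mobile vertex costs $O(n_m\log n_f)$ bits, which is polynomial in the size of $G$. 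The nondeterministic algorithm therefore checks that $G$ is planar and bipartite, guesses for every $u_m\in V_m$ a cell $C_{u_m}$ of $\mathcal A$ with $C_{u_m}\subseteq CH(u_m)$, places $u_m$ at a canonical rational representative $q_{u_m}$ of $C_{u_m}$, and accepts iff the resulting straight-line drawing is a planar convex-hull drawing of $\langle G,\phi\rangle$.

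Soundness of this reduction to a finite search is immediate, since an accepting guess is by construction a planar convex-hull drawing. For completeness, suppose $\langle G,\phi\rangle$ has a planar convex-hull drawing $\Gamma_0$, and for each $u_m$ let $C_{u_m}$ be the unique cell of $\mathcal A$ containing the position of $u_m$ in $\Gamma_0$. The boundary of $CH(u_m)$ consists of segments lying on lines through pairs of fixed vertices, hence on lines of $\mathcal A$, so $CH(u_m)$ is a union of cells of $\mathcal A$; combined with the fact that no mobile vertex coincides with a fixed vertex, this forces $C_{u_m}\subseteq CH(u_m)$, and in particular $C_{u_m}$ is bounded. Applying Lemma~\ref{lem:cells_equiv} once per mobile vertex, I can then move each $u_m$ from its position in $\Gamma_0$ to the canonical representative $q_{u_m}$ of $C_{u_m}$ without creating a crossing, obtaining exactly the drawing produced by the accepting guess for this cell assignment.

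For the verification step I would take $q_{u_m}$ to be the centroid of the vertices of the (bounded, convex) cell $C_{u_m}$, which are intersection points of pairs of lines of $\mathcal A$; then $q_{u_m}$ is a rational point in the relative interior of $C_{u_m}$ of bit length $O(L+\log n_f)$, where $L$ bounds the bit length of the coordinates of the fixed vertices, so the entire candidate drawing has polynomial bit complexity. The verifier then, in polynomial time, (i) checks $q_{u_m}\in CH(u_m)$ for each mobile vertex and (ii) checks that no two non-incident edges intersect and no edge passes through a non-incident vertex. Since $G$ is planar it has $O(n)$ edges, and a short planarity-counting argument bounds $n_m$ by $O(n_f)$ on feasible instances, so both checks run in $O(n_f^2)$ time; hence the \FMbend{0} problem lies in \NP for convex-hull drawings.

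The guess-and-check skeleton is the easy part; the real work is in making the search genuinely finite and the certificate genuinely polynomial, which is precisely what Lemma~\ref{lem:cells_equiv} delivers. Two points need care: the canonical representative $q_{u_m}$ must sit simultaneously in its cell and in $CH(u_m)$ (handled by the observation that $CH(u_m)$ is a union of cells of $\mathcal A$ and that a mobile vertex never lies at a hull vertex), and one must account for cells of every dimension rather than only the full-dimensional ones --- but since the total number of cells of $\mathcal A$ is still $O(n_f^4)$, this affects neither the certificate size nor the verifier's running time. I expect the main subtlety to be exactly this bit-complexity bookkeeping, not the combinatorial core, which Lemma~\ref{lem:cells_equiv} already settles.
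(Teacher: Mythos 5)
Your proposal follows essentially the same route as the paper: the paper's proof is exactly a guess-and-check argument that nondeterministically assigns each mobile vertex to one of the $O(n_f^4)$ cells of $\mathcal A$, relies on Lemma~\ref{lem:cells_equiv} to justify that cells are the right granularity, and verifies planarity of the resulting drawing in polynomial time. Your additional bookkeeping (canonical rational representatives, bit-length of the certificate, lower-dimensional cells) is a legitimate fleshing-out of details the paper leaves implicit, not a different argument.
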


Central ingredients to prove that the problem is in fact in $\P$ for certain input configurations are
the CH intersection graph $\Gcross$ of $G$, the cell graph $\Gcell$ of $G$, and the skeleton graph $\Gskel$ of $\Gcell$, which we formally define in the following.

The \emph{CH intersection graph} $\Gcross$ is defined as the intersection graph~\cite{McKee1672910} of the convex hulls $CH(u)$ over all $u \in V_m$.

\begin{figure}[tbp]
	\centering
		\includegraphics[scale=1]{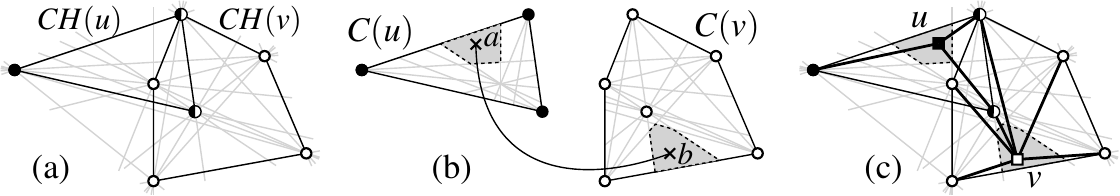}
	\caption{(a) Line arrangement $\mathcal A = \mathcal A(V_f)$ with the neighbors $N(u)$ in black and $N(v)$ in white; $CH(u)$ and $CH(v)$ intersect and thus form an edge in $\Gcross$. (b) Two clusters $C(u)$ and $C(v)$ of $\Gcell$ with one exemplary edge between two cell vertices $a$ and $b$. (c) Placing $u$ inside $\cell(a)$ and $v$ inside $\cell(b)$ yields a planar drawing of the \FM (thick edges).}
	\label{fig:graphs}
\end{figure}

The \emph{cell graph} $\Gcell$ is a clustered graph defined as follows; see Fig.~\ref{fig:graphs}(b) for an example. Each mobile vertex $u$ is associated with a cluster $C(u)$; the vertices of $C(u)$, called  \emph{cell vertices}, are the cells of ${\cal A}$ that intersect with $CH(u)$ (and in fact are contained in $CH(u)$). The vertices of $\Gcell$ are defined by the disjoint union of the vertices of all clusters\footnote{Cells in the intersection of two convex hulls correspond to different vertices of $G_c$.}, that is, $V(G_c) = \uplus_{u\in V_m}{C(u)}$. For a cell vertex $a$ of $\Gcell$, we denote by $\cell(a)$ the cell corresponding to $a$ in ${\cal A}$. For a pair of mobile vertices $u$ and $v$ such that $CH(u) \cap CH(v) \neq \emptyset$, a cell vertex $a \in C(u)$ is adjacent to a cell vertex $b \in C(v)$ if and only if placing $u$ in $\cell(a)$ and $v$ in $\cell(b)$ produces no crossing among the edges incident to $u$ and $v$; see Fig.~\ref{fig:graphs}(c). Note that $\Gcell$ has $O(n_f^4n_m)$ vertices and $O(n_f^8n_m^2)$ edges. Also, by definition, for each pair of mobile vertices $u$ and $v$ such that $CH(u) \cap CH(v) \neq \emptyset$, $u$ and $v$ can be positioned within their convex hulls without creating edge crossings if and only if there exist two adjacent cell vertices $a \in C(u)$ and $b \in C(v)$ in $\Gcell$.

The \emph{skeleton graph} $\Gskel$ is created by selecting exactly one cell vertex, called a \emph{skeleton vertex}, from each cluster of $\Gcell$, such that for every pair of mobile vertices $u$ and $v$ with $CH(u) \cap CH(v) \neq \emptyset$, the skeleton vertices of~$C(u)$ and $C(v)$ are adjacent in $\Gcell$. Graph $\Gskel$ is the subgraph of $G_c$ induced by the skeleton vertices. Note that $\Gskel$ might not exist. If $\Gskel$ exists, then it is isomorphic to $\Gcross$. The following characterization is an immediate consequence of our definitions. 

\newcommand{\fromgeometrytotopology}{An \FM $\langle G,\phi \rangle$ admits a planar 0-bend convex-hull drawing if and only if cell graph $\Gcell$ has a skeleton.}
\wormhole{fromgeometrytotopology}
\begin{lemma}\label{lem:fromgeometrytotopology}
	\fromgeometrytotopology
\end{lemma}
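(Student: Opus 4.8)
The plan is to prove the equivalence by unpacking the definitions of convex-hull drawing, cell graph, and skeleton, and showing the two directions separately. For the forward direction, suppose $\langle G,\phi\rangle$ admits a planar $0$-bend convex-hull drawing $\Gamma$. Then in $\Gamma$ each mobile vertex $u$ is placed at some point inside $CH(u)$, hence inside some cell of $\mathcal A$ contained in $CH(u)$; by Lemma~\ref{lem:cells_equiv} we may assume without loss of generality that $u$ is placed in the (relative) interior of a full-dimensional cell $\cell(a_u)$, so that $a_u \in C(u)$ is well-defined. I would select $a_u$ as the skeleton vertex of the cluster $C(u)$. Since $\Gamma$ is planar, for every pair of mobile vertices $u,v$ the edges incident to $u$ and the edges incident to $v$ do not cross; in particular, if $CH(u)\cap CH(v)\neq\emptyset$, then placing $u$ in $\cell(a_u)$ and $v$ in $\cell(a_v)$ produces no crossing among these edges, which is exactly the adjacency condition defining an edge $a_u b_v$ in $\Gcell$. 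Thus the selected vertices satisfy the defining property of a skeleton, and $\Gskel$ exists.

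For the reverse direction, suppose $\Gcell$ has a skeleton, i.e.\ there is a choice of one skeleton vertex $a_u \in C(u)$ per mobile vertex $u$ such that skeleton vertices of clusters of intersecting convex hulls are adjacent in $\Gcell$. Build a drawing by placing each fixed vertex at $\phi(v)$ and each mobile vertex $u$ at an arbitrary interior point of $\cell(a_u)$, drawing all edges as straight-line segments. Each mobile vertex lies in $CH(u)$ by construction of $\Gcell$, so this is a candidate convex-hull drawing; it remains to verify planarity. A crossing in a straight-line drawing of a bipartite graph can only occur between two edges with no common endpoint. Such an edge pair is either (i) two edges both incident to the same mobile vertex, which is impossible since they share that endpoint, or (ii) two edges incident to fixed vertices only, which cannot cross because fixed vertices are points and no mobile vertex is involved — more precisely every edge of $G$ has exactly one endpoint in $V_f$ and one in $V_m$, so a crossing pair consists of edges $(u,u_f)$ and $(v,v_f)$ with $u\neq v$ mobile and $u_f\neq v_f$ (and also $u_f \neq v_f$, $u\neq v_f$ etc.\ automatically). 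If these two edges cross, then in particular $CH(u)$ and $CH(v)$ intersect (the crossing point lies in both hulls), so by hypothesis $a_u$ and $a_v$ are adjacent in $\Gcell$, which by definition means placing $u$ in $\cell(a_u)$ and $v$ in $\cell(a_v)$ creates no crossing among the edges incident to $u$ and $v$ — contradicting the assumed crossing. Hence the drawing is planar, and it is a planar $0$-bend convex-hull drawing of $\langle G,\phi\rangle$.

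The main subtlety, and the step I would be most careful about, is the case analysis for which pairs of edges can possibly cross: one must use that $G$ is bipartite with parts $V_f$ and $V_m$ to conclude that any potential crossing pair has exactly one mobile endpoint each and that those two mobile endpoints are distinct, so that the single-pair adjacency condition in the definition of $\Gcell$ applies directly. There is also a minor technical point that needs a clean treatment: in the forward direction one must ensure each mobile vertex can be taken to lie in the interior of a full-dimensional cell (not on a line of $\mathcal A$), which is precisely what Lemma~\ref{lem:cells_equiv} — together with a small perturbation argument staying inside $CH(u)$ — provides. Beyond these two points the proof is a routine translation between the geometric notion of a convex-hull drawing and the combinatorial notion of a skeleton, exactly as anticipated in the remark preceding the lemma that it is ``an immediate consequence of our definitions.''
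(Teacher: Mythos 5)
Your proof is correct and follows essentially the same route as the paper's: select as skeleton vertex the cell containing each mobile vertex (forward direction), and conversely place each mobile vertex in its skeleton cell and observe that any crossing between edges $(u,u_f)$ and $(v,v_f)$ would force $CH(u)\cap CH(v)\neq\emptyset$ and hence contradict the adjacency of the skeleton vertices in $\Gcell$. The paper states this in three sentences; you merely spell out the same bipartiteness/convexity case analysis and add a harmless remark about perturbing a mobile vertex off a line of $\mathcal A$.
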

\begin{proof}
A planar 0-bend drawing immediately defines a skeleton. Conversely, if $\Gcell$ has a skeleton $\Gskel$, a planar 0-bend drawing $\Gamma$ is obtained by placing each $u \in V_m$ in the cell corresponding to the skeleton vertex of $C(u)$ in $\Gskel$. Since crossings may only occur between edges incident to mobile vertices $u$ and $v$ such that $CH(u) \cap CH(v) \neq \emptyset$, $\Gamma$ is planar.
\end{proof}

The characterization of Lemma~\ref{lem:fromgeometrytotopology} allows us to translate the geometric problem of finding a 0-bend convex-hull drawing of an \FM bigraph $\langle G,\phi \rangle$ to a purely combinatorial problem on a support clustered graph $\Gcell$ constructed from $\langle G,\phi \rangle$. Unfortunately, however, this combinatorial problem is $\NP$-hard in its general form, as Theorem~\ref{thm:skeletonhardness} shows. Nonetheless, we are able to solve it efficiently when $\Gcross$ is a cactus (Theorem~\ref{thm:cacti}), which includes the special cases in which $\Gcross$ is a cycle or a tree. The next two lemmas are base cases for Theorem~\ref{thm:cacti}.

\begin{lemma}\label{lem:path}
Let  $\langle G,\phi \rangle$ be an \FM such that $\Gcross$ is a path. There exists a polynomial-time algorithm that tests whether $\langle G, \phi \rangle$ has a planar 0-bend convex-hull drawing.
\end{lemma}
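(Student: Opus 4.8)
The plan is to exploit the fact that, by Lemma~\ref{lem:fromgeometrytotopology}, the question reduces to deciding whether the cell graph $\Gcell$ admits a skeleton, and that when $\Gcross$ is a path $u_1, u_2, \dots, u_t$ the required consistency constraints form a chain: we must pick a cell vertex $a_i \in C(u_i)$ for each $i$ so that $a_i$ is adjacent in $\Gcell$ to $a_{i+1}$ for every $i = 1, \dots, t-1$. Since $\Gcross$ is a path, two mobile vertices $u_i$ and $u_j$ with $|i-j| \geq 2$ have disjoint convex hulls, so no constraint couples their chosen cells. Thus the whole problem is exactly a reachability/connectivity question on a layered graph with layers $C(u_1), \dots, C(u_t)$ and edges only between consecutive layers.

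Concretely, I would process the path from one endpoint to the other, maintaining at step $i$ the set $R_i \subseteq C(u_i)$ of cell vertices $a_i$ that can be the $i$-th choice of a consistent prefix $a_1, \dots, a_i$. Initialize $R_1 = C(u_1)$ (every cell of $\mathcal A$ inside $CH(u_1)$ is a feasible placement for $u_1$ in isolation, by definition of $\Gcell$). Then set
\[
R_{i+1} = \{\, b \in C(u_{i+1}) : b \text{ is adjacent in } \Gcell \text{ to some } a \in R_i \,\}.
\]
The instance is a yes-instance if and only if $R_t \neq \emptyset$; in that case a concrete skeleton (and hence, by Lemma~\ref{lem:fromgeometrytotopology}, a planar 0-bend convex-hull drawing) is recovered by backtracking through the $R_i$'s. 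Correctness follows because the constraints are precisely ``consecutive choices must be $\Gcell$-adjacent'' and nothing else, so a global solution exists iff the forward sweep never empties.

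For the running time, recall $\Gcell$ has $O(n_f^4 n_m)$ vertices and $O(n_f^8 n_m^2)$ edges, all computable in polynomial time from $\langle G, \phi\rangle$ (each cell vertex is a cell of the arrangement $\mathcal A$, and each potential edge is tested by checking a constant-size drawing for crossings). Each sweep step $i$ touches only the edges of $\Gcell$ between $C(u_i)$ and $C(u_{i+1})$, so the total work over all steps is $O(|E(\Gcell)|)$ plus the cost of building $\Gcell$, which is polynomial in $n$. The main obstacle is not algorithmic — the chain DP is routine — but setting up the reduction cleanly: one must verify that when $\Gcross$ is a path the \emph{only} pairwise constraints are between consecutive mobile vertices (which holds because non-adjacency in $\Gcross$ means disjoint convex hulls, hence no shared edges and thus no possible crossing regardless of cell choice), and that $R_1 = C(u_1)$ is the correct initialization, i.e.\ that feasibility of a single mobile vertex placed in any cell of its hull is guaranteed by construction of $\Gcell$. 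Once these are in place, the lemma follows; the same layered-sweep idea will serve as the base case for the cactus result in Theorem~\ref{thm:cacti}.
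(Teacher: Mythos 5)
Your proposal is correct and follows essentially the same approach as the paper: both reduce to skeleton existence via Lemma~\ref{lem:fromgeometrytotopology} and then perform a forward sweep along the path, initializing all cells of $C(u_1)$ as feasible (the paper calls them \emph{active}) and propagating adjacency in $\Gcell$ layer by layer, answering yes iff the last cluster retains a feasible cell. The running-time analysis (bounded by the size of $\Gcell$) also matches.
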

\begin{proof}
By Lemma~\ref{lem:fromgeometrytotopology}, it is enough to test whether $\Gcell$ has a skeleton. Let $u_1,\dots, u_\lambda$ be the mobile vertices in the order their convex hulls appear along path $\Gcross$. Call a cell vertex $a$ of $C(u_i)$ \emph{active} if and only if the subgraph of $\Gcell$ induced by $C(u_1)\cup \dots \cup C(u_i)$ has a skeleton containing $a$, where $1 \leq i \leq \lambda$. Thus, $\Gcell$ has a skeleton if and only if there is an active cell vertex in $C(u_\lambda)$.
A simple algorithm that tests this condition works as follows. Initially mark all cell vertices of $C(u_1)$ as active, and then \emph{propagate this information forward} to the cell vertices of $C(u_\lambda)$, that is, for each $i =2,\dots,\lambda$, mark each cell vertex of $C(u_i)$ as active if it has an active neighbor in $C(u_{i-1})$. The time complexity is bounded by the number of vertices and edges in~$\Gcell$.
\end{proof}

\begin{lemma}\label{lem:cycle}
Let  $\langle G,\phi \rangle$ be an \FM such that $\Gcross$ is a simple cycle. There exists a polynomial-time algorithm that tests whether $\langle G, \phi \rangle$ has a planar 0-bend convex-hull drawing.
\end{lemma}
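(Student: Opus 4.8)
By Lemma~\ref{lem:fromgeometrytotopology} it suffices to test whether $\Gcell$ has a skeleton. Let $u_1,\dots,u_\lambda$ (with $\lambda\ge 3$) be the mobile vertices in the cyclic order in which their convex hulls appear along the cycle $\Gcross$, so that $CH(u_i)\cap CH(u_{i+1})\neq\emptyset$ for every $i$ (indices taken modulo $\lambda$) and no other pair of convex hulls intersects. By definition, a skeleton of $\Gcell$ is exactly a choice of one cell vertex $a_i\in C(u_i)$ for each $i$ such that $a_i$ and $a_{i+1}$ are adjacent in $\Gcell$ for all $i$ modulo $\lambda$.

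The plan is to reuse the forward-propagation idea of Lemma~\ref{lem:path} after breaking the cycle at $u_1$. The key step is to iterate over all candidate skeleton vertices $a\in C(u_1)$: for a fixed such $a$, mark \emph{only} $a$ as active in $C(u_1)$ and then propagate forward along the path $u_1,u_2,\dots,u_\lambda$ exactly as in Lemma~\ref{lem:path}, i.e., for $i=2,\dots,\lambda$ mark a cell vertex $b\in C(u_i)$ active if and only if it has an active neighbor in $C(u_{i-1})$. A straightforward induction on $i$ shows that $b\in C(u_i)$ is active if and only if there is a choice $a=a_1,a_2,\dots,a_i=b$ of cell vertices with $a_j\in C(u_j)$ and consecutive ones adjacent in $\Gcell$. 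We then declare success for this $a$ if and only if some active $b\in C(u_\lambda)$ is adjacent to $a$ in $\Gcell$: in that case the sequence $a_1=a,\dots,a_\lambda=b$ closes up through the edge $u_\lambda u_1$ into a full skeleton; conversely, every skeleton is detected when $a$ is chosen to be its vertex in $C(u_1)$. The algorithm returns ``yes'' if and only if success occurs for at least one $a\in C(u_1)$.

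For the running time, the algorithm performs $|C(u_1)|=O(n_f^4)$ propagations, each of which runs in time linear in the number of vertices and edges of $\Gcell$, hence the whole procedure is polynomial. The only genuine obstacle relative to the path case is the cyclic consistency constraint coupling $C(u_\lambda)$ back to $C(u_1)$, which forward propagation alone cannot enforce; this is resolved simply by the brute-force guess over the polynomially many candidates for the skeleton vertex in one fixed cluster. (The same ``break a cycle by guessing one cluster vertex'' device, combined with the block-cut tree, will underlie the cactus case of Theorem~\ref{thm:cacti}.)
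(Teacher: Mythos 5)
Your proof is correct and follows essentially the same strategy as the paper's: break the cycle at $C(u_1)$, reuse the forward propagation of Lemma~\ref{lem:path}, and enforce the closing edge back to $C(u_1)$ by an explicit adjacency check. The paper organizes this slightly differently --- it first runs forward and backward pruning passes and then tests, for each adjacent pair $v\in C(u_1)$, $w\in C(u_\lambda)$, whether a path through all clusters exists --- but the core device of guessing the skeleton vertex in one fixed cluster is the same, and your streamlined version (one propagation per candidate $a\in C(u_1)$) is equally valid and polynomial.
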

\begin{proof}
Let $u_1,\dots, u_\lambda$ be the mobile vertices in the cyclic order their convex hulls appear along $\Gcross$. Our approach is similar to the one of Lemma~\ref{lem:path}, but now it is not enough to propagate the information about the active vertices only from $C(u_1)$ to $C(u_\lambda)$. Indeed, there might be vertices of $C(u_1)$ that cannot close a cycle with an active vertex of $C(u_\lambda)$.

In the first phase, our refined algorithm starts by marking all cell vertices of $C(u_1)$ as active and then propagates this information forward to $C(u_\lambda)$, as in the case of a path. However, all cell vertices of a cluster that are not marked as active at the end of this phase are now definitely removed from $\Gcell$ (along with their incident edges), as they cannot occur in any skeleton. Then, the algorithm cleans all vertex marks and executes a \emph{backward propagation} phase from $C(u_\lambda)$ to $C(u_1)$ (symmetric to the previous one), where all the remaining vertices in $C(u_\lambda)$ are initially marked as active. As before, all vertices that are not marked as active at the end of this phase are definitely removed from $\Gcell$. Now, the algorithm removes from $C(u_1)$ all vertices with no neighbor in $C(u_\lambda)$ and from $C(u_\lambda)$ all vertices with no neighbor in $C(u_1)$, as these vertices cannot occur in a skeleton of $\Gcell$. Finally, for each pair of adjacent vertices $v \in C(u_1)$ and $w \in C(u_\lambda)$ in $\Gcell$, the algorithm checks whether there exists a path $\pi_{vw}$ from $v$ to $w$ that passes through each $C(u_j)$ exactly once $(j=1, \dots, \lambda)$.
If $\pi_{vw}$ exists, both $v$ and $w$ are marked as \emph{confirmed}. At the end, every vertex in $C(u_1) \cup C(u_\lambda)$ that is not confirmed is removed from $\Gcell$, as it cannot occur in a skeleton. Conversely, by construction, every remaining vertex $v$ in $C(u_1)$ has an adjacent vertex $w \in C(u_\lambda)$ such that $\pi_{vw} \cup (v,w)$ is a skeleton (simple cycle) of $\Gcell$. Thus the test is positive if and only if $C(u_1)$ is not empty.

It is immediate to see that the whole testing algorithm works in polynomial time in the size of $\Gcell$ and, if the test is positive, a skeleton of $\Gcell$ can be easily reconstructed visiting $\Gcell$ from any vertex $v \in C(u_1)$.
\end{proof}

\noindent We now extend the previous result to the case that $\Gcross$ is a cactus, which also covers the case of a tree. We recall that a cactus is a connected graph in which any two simple cycles share at most one vertex. A cactus is an outerplanar graph and can always be decomposed into a tree where each node corresponds to either a single vertex or a simple cycle (refer to Fig.~\ref{fi:cactus-a}).

\begin{figure}[b!]
	\centering
	\subfigure[$\Gcross$]{\label{fi:cactus-a}\includegraphics[height=5cm]{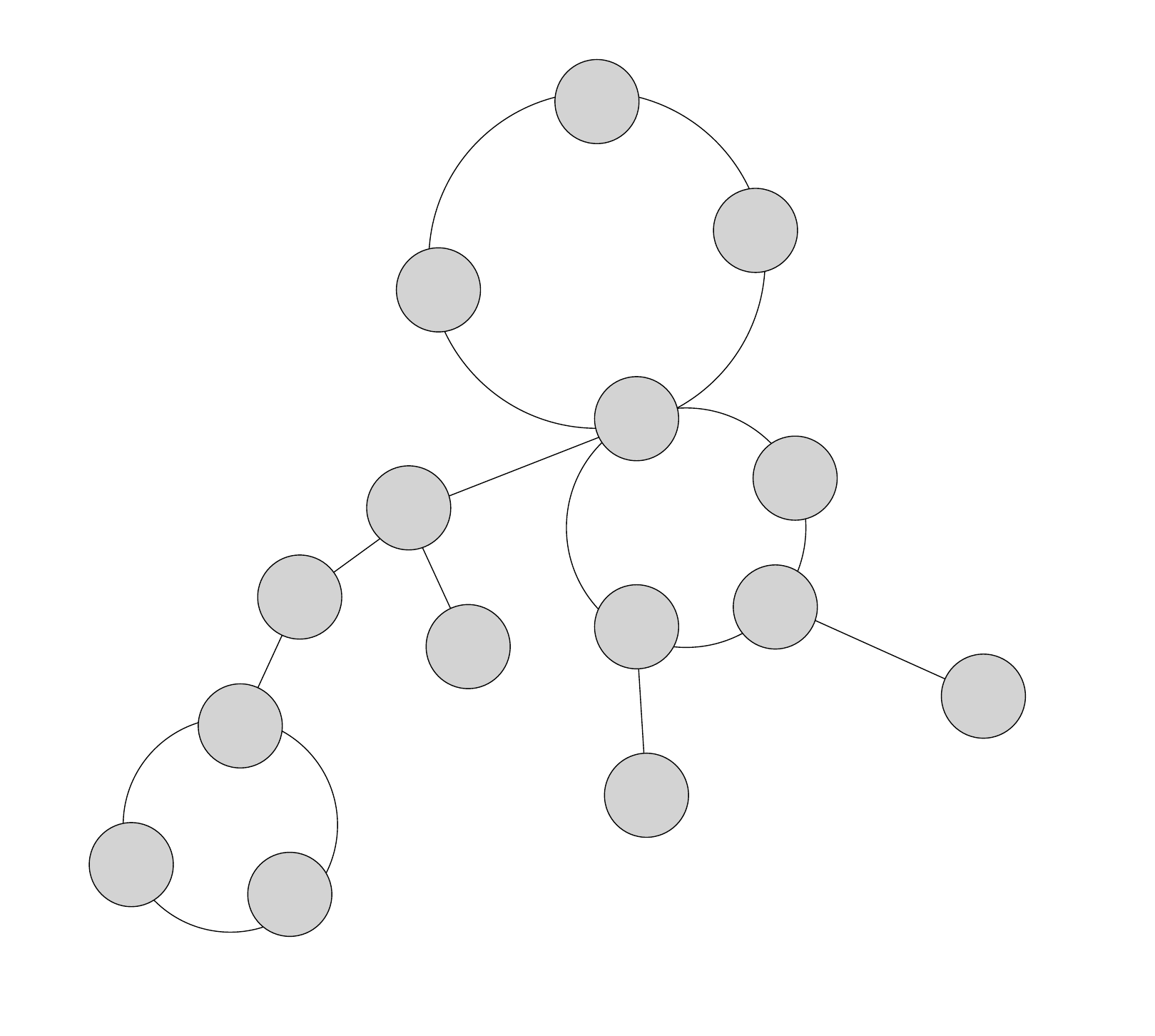}}
	\hfill
	\subfigure[$\mathcal{T}$]{\label{fi:cactus-b}\includegraphics[height=5cm]{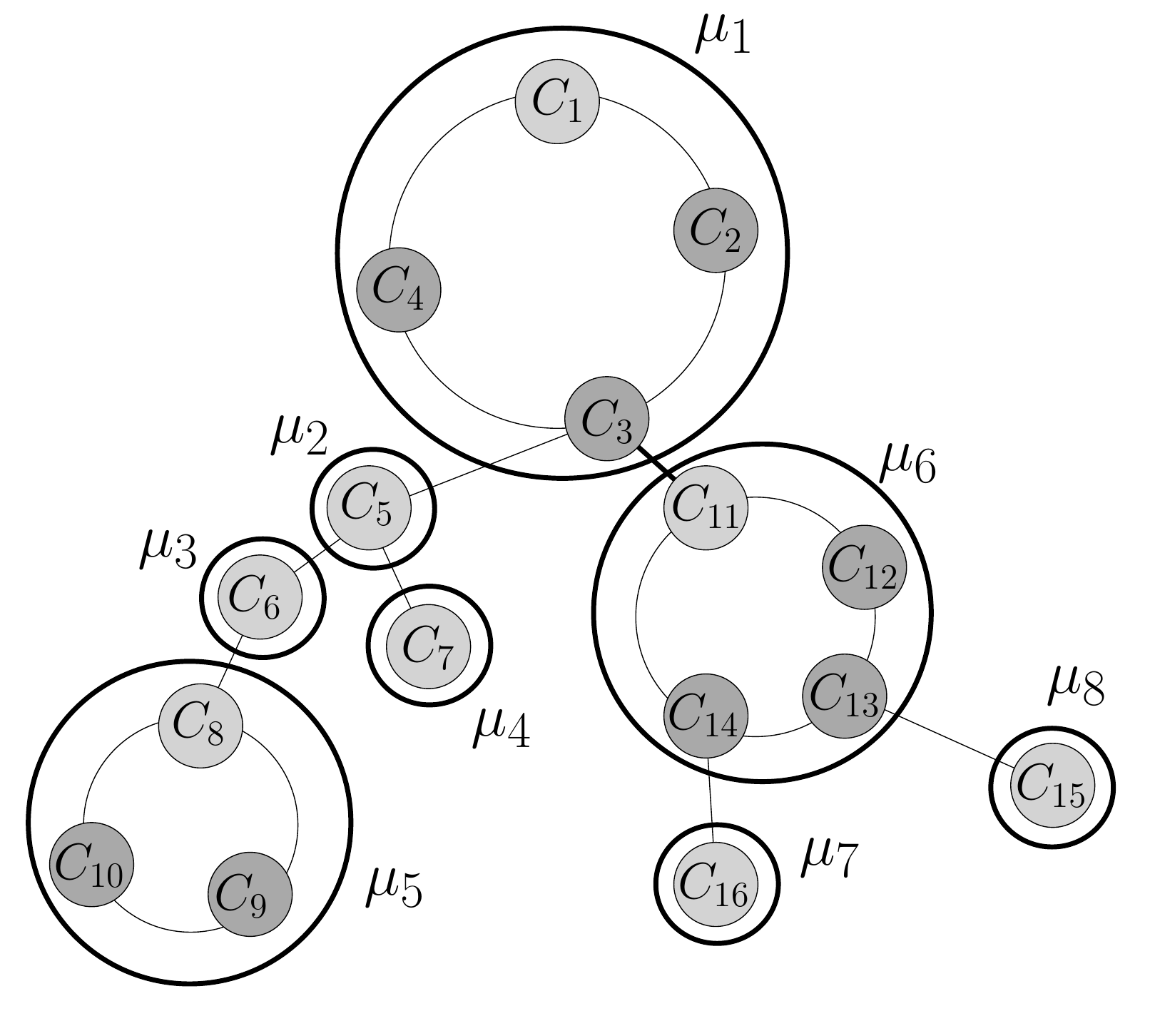}}
	\caption{%
	(a) An intersection graph $\Gcross$ that is a cactus.
	(b) The decomposition tree $\mathcal{T}$ of $\Gcross$. Clusters $C_{11}$ of $\mu_6$ and $C_3$ of $\mu_1$ correspond to the same vertex of $\Gcross$.}
\end{figure}

\newcommand{\cactusalgorithm}{Let  $\langle G,\phi \rangle$ be an \FM such that $\Gcross$ is a cactus. There exists a polynomial-time algorithm that tests whether $\langle G, \phi \rangle$ has a planar 0-bend convex-hull drawing.}
\wormhole{cactusalgorithm}
\begin{theorem}\label{thm:cacti}
	\cactusalgorithm
\end{theorem}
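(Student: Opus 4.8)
The plan is to combine the path and cycle base cases (Lemmas~\ref{lem:path} and~\ref{lem:cycle}) with a bottom-up traversal of the decomposition tree $\mathcal{T}$ of the cactus $\Gcross$. By Lemma~\ref{lem:fromgeometrytotopology}, it suffices to test whether $\Gcell$ admits a skeleton. Recall that $\mathcal{T}$ has two kinds of nodes: \emph{cycle nodes}, corresponding to simple cycles of $\Gcross$, and \emph{vertex nodes}, corresponding to single vertices of $\Gcross$; two nodes are adjacent in $\mathcal{T}$ exactly when they share a vertex of $\Gcross$. Root $\mathcal{T}$ at an arbitrary node. For a node $\mu$ and a mobile vertex $u$ that is the shared ("cut") vertex between $\mu$ and its parent, we say a cell vertex $a \in C(u)$ is \emph{feasible for $\mu$} if the subgraph of $\Gcell$ induced by the clusters of all mobile vertices in the subtree $\mathcal{T}_\mu$ rooted at $\mu$ admits a partial skeleton (one cell vertex per cluster, adjacent along each edge of $\Gcross$) whose choice for $u$ is $a$. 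The goal is to compute, for every node $\mu$, the set $\mathrm{Feas}(\mu) \subseteq C(u)$ of feasible cell vertices for its cut vertex $u$, working from the leaves of $\mathcal{T}$ upward; a skeleton of $\Gcell$ exists if and only if $\mathrm{Feas}$ at the root is nonempty.

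The key step is the merge operation at an internal node $\mu$, given $\mathrm{Feas}(\nu)$ for every child $\nu$. Each child $\nu$ shares exactly one vertex with $\mu$; call it the \emph{attachment vertex} of $\nu$ in $\mu$. First I would restrict $\Gcell$ to the clusters of the mobile vertices appearing in the component (a path or a cycle) of $\mu$ itself, and for each attachment vertex $w$ of a child $\nu$, delete from $C(w)$ every cell vertex not in $\mathrm{Feas}(\nu)$ (intersecting constraints when several children attach at the same mobile vertex, which can happen since a cut vertex of the cactus may be shared by more than two blocks). Now $\mu$ is either a single vertex node or a block that is a simple cycle; in the cycle case I run the cycle-contraction argument of Lemma~\ref{lem:cycle} on the pruned clusters, and in the path-between-attachments case the forward-propagation argument of Lemma~\ref{lem:path}, to determine which cell vertices of the cut vertex $u$ of $\mu$ toward its own parent survive as feasible. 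The output is $\mathrm{Feas}(\mu)$. At the root, which has no parent, I simply check whether the corresponding pruned structure still admits any partial skeleton, i.e., whether the base-case test returns positive. If the test succeeds, a skeleton — hence a planar 0-bend convex-hull drawing — is reconstructed top-down by fixing the root's choices and then, for each child, a consistent extension guaranteed by feasibility.

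For the running time, $\mathcal{T}$ has $O(\lambda)$ nodes where $\lambda = |V_m|$, each mobile vertex lies in $O(1)$ blocks, and each merge runs the Lemma~\ref{lem:path}/Lemma~\ref{lem:cycle} algorithm on a subgraph of $\Gcell$; since the relevant subgraphs over all nodes are edge-disjoint (each edge of $\Gcross$, hence each inter-cluster block of edges in $\Gcell$, belongs to exactly one block of the cactus), the total work is $O(\mathrm{poly}(|\Gcell|))$, hence polynomial in the input size because $\Gcell$ itself has size polynomial in $n_f$ and $n_m$.

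The main obstacle I expect is the correctness of the merge in the presence of cut vertices shared by three or more blocks, and more subtly the fact that feasibility computed in a child subtree must be an \emph{exact} invariant: I must argue that a cell-vertex choice is feasible for $\mathcal{T}_\nu$ \emph{independently} of how the rest of the drawing behaves, which holds because in a cactus the only interaction between a block and the outside is through its single cut vertex, so once that vertex's cell is fixed the two sides impose no further constraints on each other. Making this independence precise — essentially that the skeleton condition decomposes along the block-cut-tree exactly as planarity of the convex-hull drawing decomposes along intersections of convex hulls — is the heart of the proof; the rest is bookkeeping on top of the two base cases.
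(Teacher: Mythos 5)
Your proposal matches the paper's proof in all essentials: both decompose the cactus $\Gcross$ into its tree of cycle-nodes and vertex-nodes, root it, propagate sets of feasible (the paper says ``active'') cell vertices bottom-up through the cut/anchor clusters using Lemmas~\ref{lem:path} and~\ref{lem:cycle} as the per-node subroutine, intersect constraints when several children attach at the same cluster, and reconstruct a skeleton top-down. The correctness argument (independence of blocks through their single shared cut vertex) and the polynomial running-time accounting are likewise the same, so this is essentially the paper's own proof.
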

\begin{proof}
By Lemmas~\ref{lem:path} and~\ref{lem:cycle}, the statement holds when $\Gcross$ is a path or a cycle. In the general case, our testing algorithm decomposes $\Gcross$ into its tree $\mathcal{T}$ (as in Fig.~\ref{fi:cactus-b}), roots $\mathcal{T}$ at any node, and visits $\mathcal{T}$ bottom-up. More precisely, each vertex of $\Gcross$ corresponds to a convex hull $CH(u)$ of a mobile vertex $u$, and it has a one-to-one correspondence with a cluster $C(u)$ of $\Gcell$. Thus, each node $\mu$ of $\mathcal{T}$ corresponds to either a single cluster of $\Gcell$ or to a cycle of clusters of $\Gcell$.
Note that when in $\Gcross$ two cycles share a vertex (cluster of $\Gcell$), we replicate such a vertex in both nodes of $\mathcal{T}$ that correspond to the two cycles.
For example, in Fig.~\ref{fi:cactus-b} cluster $C_{11}$ inside $\mu_6$ and cluster $C_3$ inside $\mu_1$ correspond to the same vertex of $\Gcross$.
Once the root of $\mathcal{T}$ has been chosen, we define the \emph{anchor} of $\mu$ as the cluster that connects $\mu$ to its parent node in $\mathcal{T}$ (light-gray in Fig.~\ref{fi:cactus-b}).
During the bottom-up visit of $\mathcal{T}$, two cases are possible when a node~$\mu$ is visited:

\begin{itemize}
\item {\bf\boldmath $\mu$ is a leaf}. If $\mu$ contains a single cluster (i.e., its anchor), then all its cell vertices are marked as active; if $\mu$ contains a cycle of clusters, then the active cell vertices of its anchor are computed as for $C(u_1)$ in the proof of Lemma~\ref{lem:cycle}.

\item {\bf\boldmath $\mu$ is an internal node}. Let $\nu_1, \nu_2, \dots, \nu_k$ be the children of $\mu$ in $\mathcal{T}$ and denote by $C_{q_i}$ the cluster of $\mu$ connected to the anchor of $\nu_i$. Note that $C_{q_i}$ may coincide with some $C_{q_j}$ if $q_i \neq q_j$. Also, the anchor of $\nu_i$ and $C_{q_i}$ may correspond to the same vertex of $\Gcross$.

\begin{itemize}
\item For each $i=1, \dots, k$, if the anchor of $\nu_i$ differs from $C_{q_i}$ in $\Gcross$, remove from $C_{q_i}$ all cell vertices that are not connected to an active cell vertex of the anchor of $\nu_i$ in $\Gcell$, as they cannot occur in any skeleton of $\Gcell$. On the other hand, if the anchor of $\nu_i$ and $C_{q_i}$ coincide in $\Gcross$, remove from $C_{q_i}$ all vertices of the anchor of $\nu_i$ that are not marked as active.

\item Now, if $\mu$ contains a single cluster (i.e., its anchor), then all its remaining cell vertices are marked as active; if $\mu$ contains a cycle of clusters, then the active cell vertices of its anchor are computed as in Lemma~\ref{lem:cycle}. At this point, if the anchor of $\mu$ contains no active vertex, the algorithm stops and the instance is rejected, as a skeleton does not exist.
\end{itemize}
\end{itemize}

Once the bottom-up visit of $\mathcal{T}$ ends, the test is positive if and only if the anchor of the root node of $\mathcal{T}$ has an active cell vertex $w$, and in this case one can reconstruct a skeleton of $\Gcell$ starting from $w$ and visiting $\mathcal{T}$ top-down. In particular, during the top-down visit, for each node $\mu$ of $\mathcal{T}$, any active vertex in the anchor of $\mu$ can be arbitrarily selected, as it is connected to the parent node of $\mu$ by construction. Also, if $\mu$ corresponds to a simple cycle of clusters, the construction of a cycle that connects these clusters is done as in Lemma~\ref{lem:cycle}.

Concerning the time complexity, the above algorithm takes polynomial time in the size of $\Gcell$. Indeed, the number of clusters that may occur in multiple nodes of $\mathcal{T}$ (i.e., those that are shared by multiple cycles of clusters) is at most the number of cycles in $\Gcross$. Therefore, the total number of clusters over all nodes of $\mathcal{T}$ is linear in the number of clusters of $\Gcell$. This also implies that the total number of cell vertices over all clusters of $\mathcal{T}$ is linear in the number of cell vertices in $\Gcell$. Finally, each node $\mu$ of $\mathcal{T}$ is visited twice (once in the bottom-up visit and once in the top-down visit), and in each visit of $\mu$ the algorithm has a running time that is polynomial in the number of cell vertices in the clusters of $\mu$.
\end{proof}

Finally, we show the following \NP-completeness result on a combinatorial generalization of our problem. We re-use the terminology of the \FMbend{0} problem to emphasize the analogies.

\newcommand{\skeletonhardness}{Let $\Gcross = (\mathcal C, \mathcal E)$ be a graph, where $\mathcal C$ is a set of disjoint clusters of cells. Also, let $\Gcell = (V,E)$ be a graph, where each $v \in V$ is a cell of a cluster $C(v) \in \mathcal C$ and $(u,v) \in E$ only if $(C(u),C(v)) \in \mathcal E$. It is \NP-complete to test if there is a subset $V' \subseteq V$ of skeleton vertices, containing exactly one cell from each cluster in $\mathcal C$ such that the induced subgraph $\Gcell[V']$ is isomorphic to~$\Gcross$.}
\wormhole{skeletonhardness}
\begin{theorem}\label{thm:skeletonhardness}
\skeletonhardness
\end{theorem}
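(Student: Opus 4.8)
The plan is to settle the two directions separately: membership in \NP\ by a short argument showing that the isomorphism required in the statement is essentially the ``natural'' cluster map, and \NP-hardness by a reduction from graph $3$-colourability. The conceptual point underlying both parts is that the problem is exactly a binary constraint satisfaction instance whose variables are the clusters of $\mathcal C$, whose domain for a cluster is its set of cells, and whose constraints, one per edge of $\Gcross$, are the admissible cell pairs recorded in $\Gcell$.

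\emph{Membership in \NP.} Let $V'$ contain exactly one cell from each cluster. Since the instance guarantees that an edge of $\Gcell$ joins cells of $\Gcross$-adjacent (in particular distinct) clusters, every edge $(u,v)$ of $\Gcell[V']$ has $C(u)\neq C(v)$ and $(C(u),C(v))\in\mathcal E$; as $V'$ picks a single cell per cluster, the map $(u,v)\mapsto(C(u),C(v))$ is an injection of $E(\Gcell[V'])$ into $\mathcal E$, so $|E(\Gcell[V'])|\le|\mathcal E|$. I claim $\Gcell[V']\cong\Gcross$ if and only if equality holds: the forward direction is immediate because isomorphic graphs have equally many edges, and for the converse, if the injection above is a bijection then, combined with the bijection $v\mapsto C(v)$ from $V'$ to $\mathcal C$, one checks directly that the latter is a graph isomorphism $\Gcell[V']\to\Gcross$. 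Hence a nondeterministic algorithm may guess $V'$ and accept iff $V'$ has one cell per cluster and $|E(\Gcell[V'])|=|\mathcal E|$, both checkable in polynomial time.

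\emph{\NP-hardness.} I would reduce from $3$-colourability of a simple graph $H=(V_H,E_H)$. Set $\Gcross:=H$ (so $\mathcal C=V_H$ and $\mathcal E=E_H$); give each cluster $C_v$ exactly three cells $c_v^1,c_v^2,c_v^3$, one per colour; and for every $(u,v)\in E_H$ and every pair $i\neq j$ insert the edge $c_u^i c_v^j$ into $\Gcell$, adding no other edges. This is a valid instance — each $\Gcell$-edge joins cells of $\Gcross$-adjacent clusters — and it is built in polynomial time. A choice of one cell $c_v^{\chi(v)}$ per cluster realises every edge of $\Gcross$ (equivalently, by the previous paragraph, induces a subgraph isomorphic to $\Gcross$) precisely when $c_u^{\chi(u)}c_v^{\chi(v)}\in E(\Gcell)$ for all $(u,v)\in E_H$, i.e.\ when $\chi(u)\neq\chi(v)$ along every edge of $H$; that is, exactly when $\chi$ is a proper $3$-colouring of $H$. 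Therefore the constructed instance has a valid set of skeleton vertices iff $H$ is $3$-colourable, which together with the \NP-membership above yields \NP-completeness.

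The one genuinely delicate step is the equivalence invoked twice above, namely replacing ``$\Gcell[V']$ is isomorphic to $\Gcross$'' by the operational condition ``the chosen cells realise every edge of $\Gcross$'', since an abstract witnessing isomorphism need not a priori respect the cluster partition; the injectivity-and-counting argument is precisely what removes this obstacle. Everything else — the polynomial size of the gadget and the polynomial-time verification — is routine, and the very same construction with $k$ cells per cluster reduces $k$-colourability for any fixed $k\ge 3$.
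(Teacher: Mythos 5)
Your proof is correct, but it takes a genuinely different route from the paper's. The paper reduces from \textsc{3Sat}: it builds a two-cell cluster per variable (one cell per literal) and a three-cell cluster per clause, with each clause's literal cell wired to the matching cell of its own variable and to all four cells of the other two variables, so that a skeleton exists iff the formula is satisfiable. You instead reduce from graph $3$-colourability with a uniform gadget ($\Gcross:=H$, three cells per cluster, an edge $c_u^ic_v^j$ for every $H$-edge and every $i\neq j$), which is more symmetric, makes the binary-CSP structure of the problem explicit, and generalises verbatim to $k$-colourability. A further point in your favour: the theorem asks only that $\Gcell[V']$ be isomorphic to $\Gcross$ via \emph{some} isomorphism, not necessarily the natural map $v\mapsto C(v)$, and your edge-counting argument (the injection of $E(\Gcell[V'])$ into $\mathcal E$, with isomorphism forcing equality and hence forcing every $\Gcross$-edge to be realised by the chosen cells) closes exactly this gap; the paper's converse direction relies on the same fact implicitly without spelling it out, and your argument also shows the certificate can be verified without invoking a graph-isomorphism test. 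Both reductions are polynomial and both establish \NP-completeness.
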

\begin{proof}[sketch]
The problem is clearly in $\NP$. The hardness proof is by reduction from \textsc{3Sat}. For a boolean \textsc{3Sat} formula $\psi$  create a cluster $C(x)$ for each variable $x$ in $\psi$ and a cluster $C(\gamma)$ for each clause $\gamma$ of $\psi$. In $\Gcross$ each clause cluster is adjacent to the three clusters of the variables occurring in the clause.  Each variable cluster $C(x)$ consists of two cells in $\Gcell$, one for the positive literal $x$ and one for its negation $\neg{x}$. Also, each clause cluster $C(\gamma)$ contains three cells, one for each literal. Finally, connect each literal cell of a clause $\lambda$ to the corresponding cell of its variable cluster and to all four cells of the other two variables of $\gamma$. It can be seen that $\psi$ has a satisfying truth assignment iff there exists a subset of skeleton vertices in $\Gcell$ that induces a subgraph isomorphic to $\Gcross$.
\end{proof}

\section{1-bend Drawings in the $h$-Strip Drawing Model}\label{se:1-bend}
Our model for \FMbends{1} is inspired by the boundary labeling approach~\cite{DBLP:journals/jgaa/BekosCFH0NRS15}, where mobile vertices are regarded as labels that must be connected to the fixed vertices. In the boundary labeling model, the fixed vertices are inside a single rectangular region and each label is either to the left or to the right of this region. Our model allows for multiple rectangular regions (corresponding to horizontal strips); each mobile vertex is placed outside of these regions, either below or above each of them. To avoid long edges and make the drawing more readable, edges are not allowed to traverse regions. 

More formally, our model is called the \emph{$h$-strip model} and is defined as follows. Let $\langle G=(V_f,V_m,E),\phi \rangle$ be an \FM and assume that the vertices of $V_f$ all have distinct $x$-coordinates (this condition is always achievable by a suitable rotation of the plane). For the sake of simplicity, for a vertex $u \in V_f$, we do not distinguish between $u$ and its fixed position $\phi(u)$. Let $\mathcal{S} = \{S_1, S_2, \dots, S_h\}$ $(h \geq 1)$ be a top-to-bottom sequence of (closed) disjoint horizontal strips of the plane that partition $V_f$, i.e., each $S_i$ has a finite height and infinite width, each vertex of $V_f$ lies in one $S_i$, and $S_i \cap S_{i+1} = \emptyset$ for $i = 1, \dots, h-1$. Since the strips are disjoint, there is always a non-empty region of the plane between two consecutive strips, which does not contain fixed vertices. Also, there are no fixed vertex above $S_1$ and below $S_h$. Any point that is not inside a strip is called a \emph{free point}. For a vertex $u \in V_f$, the strip that contains $u$ is called the \emph{strip of $u$}.

A \emph{$1$-bend drawing of $G$ within $\mathcal{S}$} is defined as follows; see Fig.~\ref{fi:strip-model}: $(i)$ Each vertex $v \in V_m$ is mapped to a distinct free point. $(ii)$ Each edge $e=(u,v)$, with $u \in V_f$ and $v \in V_m$ consists of a segment $\overline{vp}$ from $v$ to a point $p$ on the boundary of the strip of $u$ and of a vertical segment $\overline{pu}$; all points of $\overline{vp}$ but $p$ are free points, while $\overline{pu}$ is completely inside the strip of $u$. $(iii)$ No edge intersects the boundary of a strip twice and no two edges cross in a free point.

\begin{wrapfigure}[16]{R}{0.40\textwidth}
\begin{center}
\includegraphics[scale=0.30]{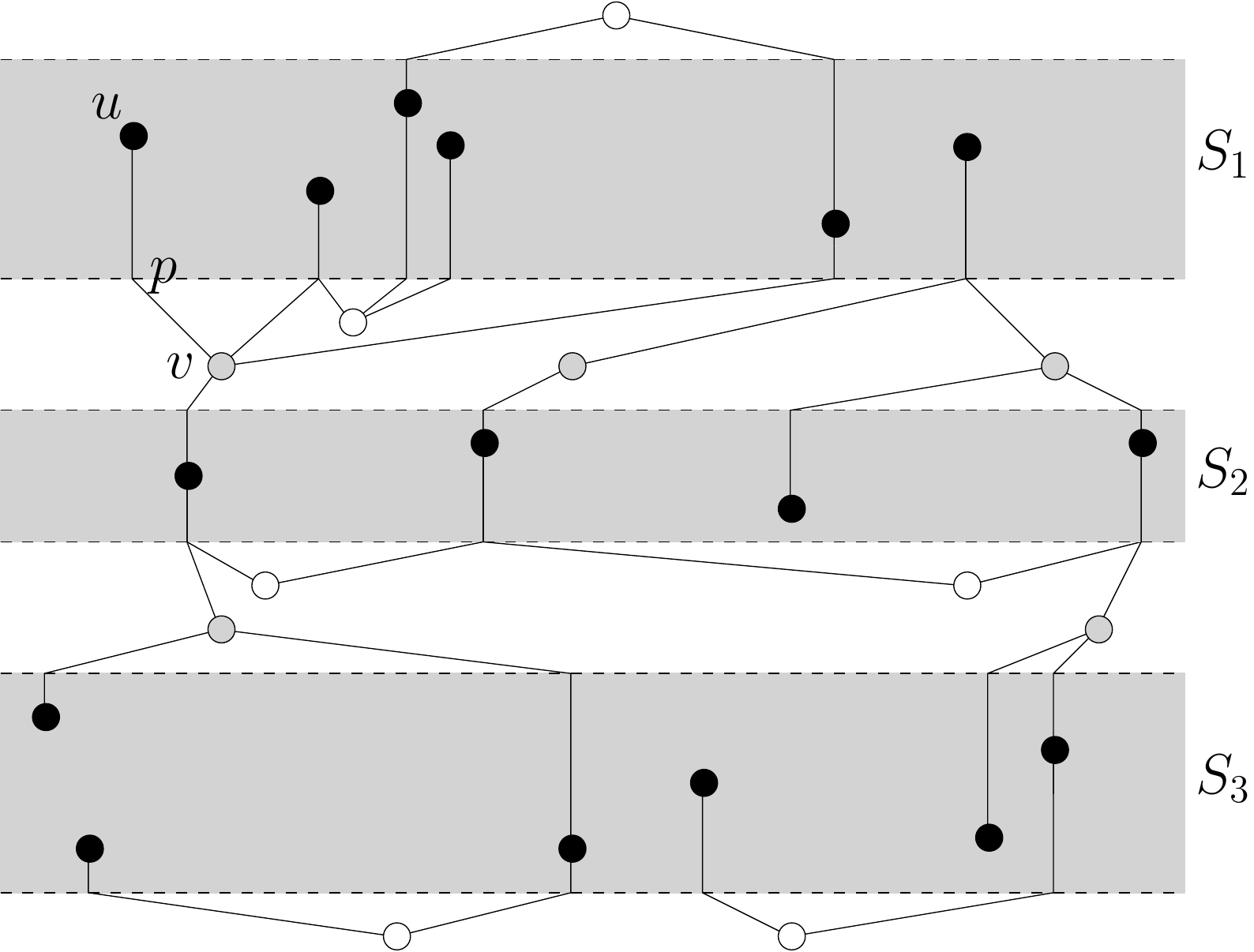}
\caption{A $3$-strip drawing.}\label{fi:strip-model}
\end{center}
\end{wrapfigure}

Note that in the $h$-strip model two distinct edges $e_1=(u,v_1)$ and $e_2=(u,v_2)$, where $u \in V_f$, share their vertical segments if these segments are incident to $u$ both from below or both from above. This overlap does not create ambiguity and reduces the visual complexity caused by the edges. Figure~\ref{fi:strip-model} shows a $1$-bend drawing of a bigraph within a given set of three strips (gray regions), with fixed vertices in black. Also note that, if an \FM $G$ has no $1$-bend drawing for a set $\mathcal{S}$ of strips, splitting an element of $\mathcal{S}$ into two strips may lead to a feasible solution; see Fig.~\ref{fi:splitting-a}. Conversely, splitting a strip may transform a positive instance into a negative one; see Fig.~\ref{fi:splitting-b}. We prove the following.

\newcommand{\onebendstripmodel}{Let $\langle G=(V_f,V_m,E),\phi \rangle$ be an $n$-vertex \FM and let $\mathcal{S}$ be a set of horizontal strips that partition $V_f$. There exists an $O(n)$-time algorithm that tests whether $\langle G, \phi \rangle$ admits a $1$-bend drawing within $\mathcal{S}$.}

\wormhole{onebendstripmodel}
\begin{theorem}\label{th:1-bend-strip-model}
	\onebendstripmodel
\end{theorem}
\begin{proof}[sketch]
Call a fixed vertex \emph{black}, a mobile vertex with all neighbors in the same strip \emph{white}, and the remaining vertices \emph{gray}. A gray vertex with neighbors in two consecutive strips must lie between them, while each white vertex can lie either above or below the strip of its neighbors. If a grey vertex has neighbors that are not in two consecutive strips, the instance is immediately rejected.

Let $\mathcal{S} = \{S_1, \dots, S_h\}$ be the sequence of strips. For each strip $S_i \in \mathcal{S}$, let $V_f^i = \{u_1^i, \dots, u_{r_i}^i\}$ be the left-to-right sequence of black vertices inside $S_i$. Also, let $V_m^i$ be the set of mobile vertices connected to some vertex of $V_f^i$. Arranging the vertices of $V_m^i$ above or below $S_i$ so to avoid crossings between their incident edges equals to assigning each of them either above or below the half-plane determined by a horizontal line that contains $u_1^i, \dots, u_{r_i}^i$, in this left-to-right order. Hence, testing if a $1$-bend drawing within $\mathcal{S}$ exists generalizes testing the existence of a $0$-bend drawing when all fixed vertices are collinear. As in Theorem~\ref{th:collinear}, this problem is reduced to testing planarity of a graph $G'$ suitably defined by augmenting $G$. Namely, for each $S_i$, add a cycle $C_i$ connecting all edges of $V_f^i$ in their left-to-right order; then, subdivide edge $(u_1^i, u_{r_i}^i)$ of $C_i$ with three vertices $v_1^i, v_2^i, v_3^i$, in this order from $u_1^i$ to $u_{r_i}^i$, and call $C'_i$ the subdivision of $C_i$; finally, for each $i = 1, \dots, h-1$ and $j = 1,2,3$, connect $v_j^i$ to $v_j^{i+1}$. Graph $G'$ is planar iff it has a planar embedding where $C'_i$ is inside $C'_{i+1}$ ($i \in \{1, \dots, h-1\}$). A mobile vertex $w$ between $C'_i$ and $C'_{i+1}$ corresponds to placing $w$ above $S_{i+1}$ and below $S_i$. If $w$ is in the outer face of $G'$ then $w$ is below $S_h$, and if $w$ is inside $C'_1$ then $w$ is above $S_1$. Since the size of $G'$ is linear in the size of $G$ and graph planarity testing is linear-time solvable, the statement holds.
\end{proof}

\begin{figure}[tb]
\centering
\subfigure[]{\label{fi:splitting-a}\includegraphics[scale=0.43]{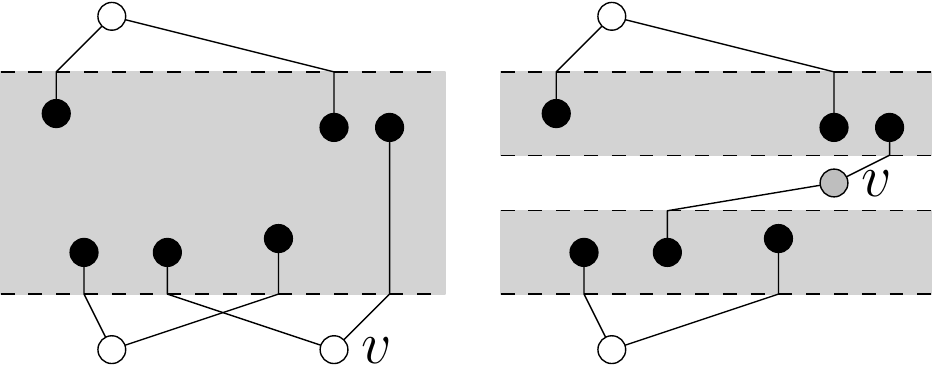}}
\hspace{3cm}
\subfigure[]{\label{fi:splitting-b}\includegraphics[scale=0.43]{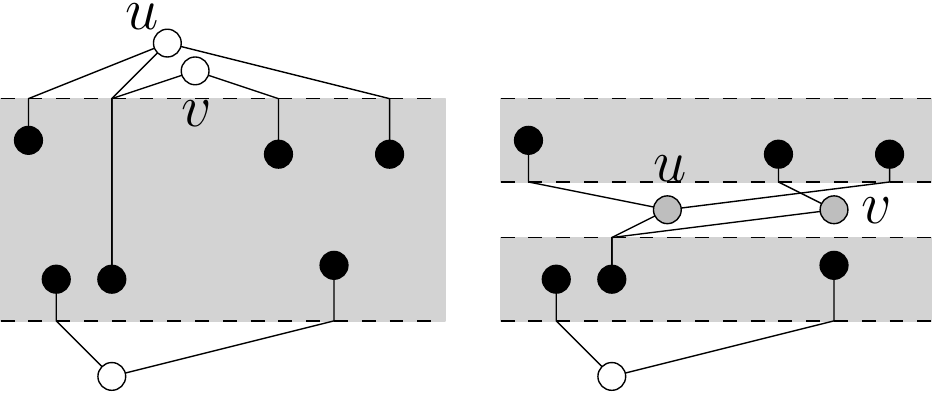}}
\caption{(a) An instance with a single strip and no solution (left); splitting the strip into two strips, the instance becomes feasible (right). (b) A positive instance with a single strip; splitting the strip into two strips, it becomes unfeasible.}
\end{figure}

The next result immediately follows by iterating the technique in the proof of Theorem~\ref{th:1-bend-strip-model} over all possible ways of partitioning $V_f$ into $h$ strips.

\begin{corollary}\label{co:1-bend-strip-model}
Let $\langle G=(V_f,V_m,E),\phi \rangle$ be an $n$-vertex \FM and let $h \in \mathbb N^+$ be a constant. There is an $O(|V_f|^{h-1}n)$-time algorithm that tests if $\langle G, \phi \rangle$ has a $1$-bend drawing within $\mathcal{S}$, for some set $\mathcal{S}$ of $h$ strips that partition $V_f$.
\end{corollary}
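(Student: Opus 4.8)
The plan is to reduce the problem of finding \emph{some} admissible $h$-strip partition to a bounded number of \emph{fixed}-partition problems, each of which is solved by Theorem~\ref{th:1-bend-strip-model}, and then to enumerate all of them. First I would note that, since the strips of $\mathcal{S}$ are horizontal, pairwise disjoint, and ordered from top to bottom, the strip containing a fixed vertex $u$ is determined solely by the $y$-coordinate of $\phi(u)$. Hence, listing $V_f$ in non-decreasing order of $y$-coordinate, every admissible $\mathcal{S}$ cuts this list into $h$ contiguous blocks (the $i$-th block being $S_i \cap V_f$), with consecutive blocks separated by a horizontal gap that is free of fixed vertices; conversely, any such cut whose cut points lie between two distinct $y$-levels is realized by some (indeed infinitely many) choices of strip boundaries. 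This yields at most $O(|V_f|^{h-1})$ combinatorially distinct partitions for constant $h$, obtained by placing $h-1$ cuts among the at most $|V_f|-1$ eligible gaps (allowing empty blocks changes only the constant).

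The key step is then to argue that the existence of a $1$-bend drawing within $\mathcal{S}$ depends only on the induced combinatorial partition, not on the metric positions of the strip boundaries. For this I would revisit the construction in the proof of Theorem~\ref{th:1-bend-strip-model} and verify that each of its ingredients — the black/white/gray classification of the vertices, the immediate rejection of an instance whenever a gray vertex has neighbors outside two consecutive strips, and the augmented planar instance $G'$ with its nested cycles $C'_i$ — is a function of only the adjacency structure of $G$, the assignment of each fixed vertex to a block, and the left-to-right order of the fixed vertices inside each block. The last of these is just the global $x$-order of $V_f$ restricted to the block, so it does not change when the boundaries are moved. It then follows that two admissible sets $\mathcal{S}$ that induce the same combinatorial partition produce the same graph $G'$, and hence the same answer, while every combinatorial partition is realizable because a horizontal gap between consecutive blocks is always available.

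With these two facts in hand, the algorithm enumerates the $O(|V_f|^{h-1})$ combinatorial partitions; for each one it fixes an arbitrary compatible set $\mathcal{S}$ of $h$ strips, runs the $O(n)$-time test of Theorem~\ref{th:1-bend-strip-model}, and accepts if and only if at least one run succeeds. Correctness is immediate from the two observations above, and the running time is $O(|V_f|^{h-1}\,n)$. I expect the only genuine obstacle to be the invariance claim of the second paragraph: making it fully rigorous requires a careful (but routine) re-reading of how $G'$ is built in Theorem~\ref{th:1-bend-strip-model}, in particular checking that no step of that construction implicitly uses the actual placement of the strips rather than the combinatorial partition they induce.
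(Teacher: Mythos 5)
Your proposal is correct and follows essentially the same route as the paper, which simply states that the corollary is obtained ``by iterating the technique in the proof of Theorem~\ref{th:1-bend-strip-model} over all possible ways of partitioning $V_f$ into $h$ strips.'' Your additional observations --- that the admissible partitions are exactly the $O(|V_f|^{h-1})$ cuts of the $y$-sorted list of $V_f$ into contiguous blocks, and that the test of Theorem~\ref{th:1-bend-strip-model} depends only on this combinatorial partition and not on the metric placement of the strip boundaries --- are exactly the details the paper leaves implicit, and they are verified correctly.
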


\section{Conclusions and Open Problems}
We introduced \FMs, showed that the \FMbend{k} problem is \NP-hard in the general case, and gave polynomial-time algorithms for $k \in \{0,1\}$ in some interesting restricted cases. Several open research questions remain: 

\smallskip\noindent{\bf Q1.} We could solve the \FMbend{0} problem for convex-hull drawings if the CH intersection graph is a cactus and show that it is \NP-complete in a non-geometric setting. Can we solve the problem for larger classes of convex-hull drawings in polynomial time or extend the \NP-completeness to our geometric setting? More generally, for which other layout constraints or sub-families of \FMs does the \FMbend{k} problem become tractable?

\smallskip\noindent{\bf Q2.} Our focus was on proving the existence of polynomial-time algorithms under certain layout constraints, but some of the algorithms have high time complexity. Thus, finding more efficient algorithms is of interest.

\smallskip\noindent{\bf Q3.} We focused on crossing-free drawings of \FMs. Relaxing the planarity requirement (e.g., for a given maximum number of permitted crossings per edge) is an interesting variant, as well as, designing heuristics or exact approaches for crossing/bend minimization.

\bibliography{FMBigraphs}
\bibliographystyle{splncs03}

\newpage
\appendix
\section*{Appendix}

\begin{backInTime}{collinear}
\begin{theorem}
	\collinear
\end{theorem}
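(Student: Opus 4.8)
The plan is to reduce the problem to a single planarity test, as the proof sketch suggests. Assume $n_f \ge 2$ (if $n_f \le 1$ the instance is trivially positive), and let $\ell$ be the line through the fixed vertices, oriented horizontally. Let $p_1, p_2, \dots, p_{n_f}$ be the vertices of $V_f$ in the order in which they occur along $\ell$, and build $G'$ from $G$ by adding the edges of the cycle $C = p_1 p_2 \cdots p_{n_f} p_1$ on the fixed vertices (for $n_f \ge 3$ no multiple edge arises, since $G$ has no edge between two fixed vertices). The heart of the proof is the claim that $\langle G, \phi \rangle$ admits a planar $0$-bend drawing if and only if $G'$ is planar. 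Since $G$ is planar, $G'$ has $O(n)$ vertices and edges, so its planarity is decided in $O(n)$ time by a standard linear-time planarity algorithm, and the theorem follows from the claim.

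For the ``only if'' direction, take a planar $0$-bend drawing $\Gamma$ of $\langle G, \phi \rangle$. First observe that we may assume that no mobile vertex lies on $\ell$: if a mobile vertex $v$ lies on $\ell$, then every fixed neighbour of $v$ must be adjacent to $v$ along $\ell$ within a single gap between two consecutive fixed vertices (otherwise an edge incident to $v$ would pass through a fixed vertex), and therefore a sufficiently small vertical perturbation of $v$ removes $v$ from $\ell$ without creating any crossing; performing this for all such $v$ yields a planar $0$-bend drawing in which $\ell$ meets $\Gamma$ exactly at the fixed vertices. Consequently each open segment $p_i p_{i+1}$ of $\ell$, as well as the two rays of $\ell$ outside the segment $p_1 p_{n_f}$, is disjoint from $\Gamma$, so the path $p_1 p_2 \cdots p_{n_f}$ can be drawn along $\ell$ without new crossings, and it can be closed into the cycle $C$ by a simple arc that leaves $\Gamma$ along the leftward ray at $p_1$, goes around the bounded drawing, and returns along the rightward ray at $p_{n_f}$. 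This exhibits a planar drawing of $G'$.

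For the ``if'' direction, which is the main obstacle, start from a planar embedding of $G'$. The cycle $C$ is a simple closed curve splitting the sphere into two disks $D^+$ and $D^-$, each bounded by $C$ and hence carrying the fixed vertices in the cyclic order $(p_1, \dots, p_{n_f})$; every mobile vertex, together with all its incident edges, lies inside exactly one of the two disks, because an edge of $G$ is a simple arc with one endpoint on $C$ and the other at an interior (mobile) vertex and thus cannot re-cross $C$. This partitions $V_m$ into $M^+ \uplus M^-$, and the subgraph of $G$ contained in each disk is a forest of stars centred at its mobile vertices. Now place the fixed vertices at their prescribed positions $\phi(p_1), \dots, \phi(p_{n_f})$ on $\ell$: the upper and the lower half-plane are disks on whose boundaries $V_f$ appears in the cyclic orders $(p_1, \dots, p_{n_f})$ and its reverse, matching $\partial D^+$ and $\partial D^-$. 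It remains to realise the embedding of the stars of $M^+$ in the upper half-plane, and of $M^-$ in the lower half-plane, with straight-line edges. This is guaranteed by the following realisability fact: a planar embedding of a forest of stars inside a disk, with all leaves placed at prescribed positions on the bounding line in a way consistent with the cyclic order of the leaves on the boundary, admits a straight-line realisation. One proves it by placing the star centres very close to $\ell$, at heights that strictly decrease with the nesting depth of each star in the region hierarchy induced by the embedding and at horizontal coordinates inside the appropriate sub-region, so that each star becomes a thin fan contained in the region of its parent; alternatively it follows from a limiting argument that flattens all stars towards $\ell$. Superimposing the two half-plane drawings yields the desired planar $0$-bend drawing of $\langle G, \phi \rangle$. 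The technical crux is exactly this straight-line realisation of nested star forests with a prescribed collinear boundary; the rest of the argument is routine.
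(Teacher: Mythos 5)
Your proposal is correct and follows exactly the paper's approach: augment $G$ with a cycle through the fixed vertices in their order along $\ell$ and reduce to a single linear-time planarity test. You supply more detail than the paper does for the ``if'' direction (the forced rotation at mobile vertices, the laminar nesting of the stars in each half-plane, and the thin-fan straight-line realisation with heights decreasing in nesting depth), all of which is sound; the paper states this equivalence without elaboration.
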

\begin{proof}
Assume w.l.o.g.\ that all vertices of $V_f$ lie on a horizontal line $\ell$. In any planar 0-bend drawing of $\langle G, \phi \rangle$, we can assume that every vertex $w \in V_m$ is either above or below $\ell$. Indeed, if $w$ lies on $\ell$ then $w$ has degree either one or two, and in this last case it is incident to two consecutive vertices of $V_f$ along $\ell$: We can always slightly move $w$ above or below $\ell$ without changing the planar embedding of the drawing. Hence, deciding whether $\langle G, \phi \rangle$ has a planar 0-bend drawing is equivalent to deciding whether there exists an assignment of each mobile vertex to one of the two half planes determined by $\ell$ that avoids edge crossings. This problem coincides with testing the planarity of a graph $G'$ obtained by augmenting $G$ with a cycle that connects all fixed vertices in the order they appear along $\ell$: A vertex inside (outisde of) the cycle corresponds to a vertex above (below) $\ell$. Since the size of $G'$ is linear in the size of $G$ and since the graph planarity testing problem is linear-time solvable~\cite{DBLP:journals/jcss/BoothL76,DBLP:journals/jacm/HopcroftT74}, the statement follows.
\end{proof}
\end{backInTime}

\begin{backInTime}{skeletonhardness}
\begin{theorem}
	\skeletonhardness
\end{theorem}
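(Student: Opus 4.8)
The plan is to show membership in $\NP$ and then $\NP$-hardness by a reduction from \textsc{3Sat}, along the lines sketched in the body. For membership, I would take as a certificate the candidate set $V'$ together with the map $v \mapsto C(v)$ it induces onto $\mathcal C$; a verifier checks in polynomial time that $V'$ contains exactly one cell per cluster and that, for every edge $(C,C') \in \mathcal E$, the chosen cells of $C$ and $C'$ are adjacent in $\Gcell$. This last check suffices: since edges of $\Gcell$ only join cells of clusters adjacent in $\Gcross$, and $V'$ meets each cluster exactly once, $\Gcell[V']$ is always (via $v \mapsto C(v)$) a spanning subgraph of $\Gcross$, and a spanning subgraph is isomorphic to $\Gcross$ precisely when it contains all $|\mathcal E|$ edges.

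For the hardness I would start from a \textsc{3Sat} formula $\psi$ over variables $x_1, \dots, x_n$ with clauses $\gamma_1, \dots, \gamma_m$, assuming by a routine normalization that each clause has three literals on three distinct variables. I build $\Gcross$ as the variable--clause incidence graph: one vertex $C(x_i)$ per variable, one vertex $C(\gamma_j)$ per clause, and an edge $(C(\gamma_j),C(x_i))$ whenever $x_i$ occurs in $\gamma_j$; thus every clause vertex has degree exactly three. The graph $\Gcell$ is obtained by giving each $C(x_i)$ two cells $x_i^{\mathrm T}, x_i^{\mathrm F}$ and each $C(\gamma_j)$ three cells, one per literal of $\gamma_j$. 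For a clause $\gamma_j$ on variables $x_a, x_b, x_c$ and a literal $\ell$ of $\gamma_j$, say on $x_a$, I connect the cell of $\ell$ in $C(\gamma_j)$ to the single cell of $C(x_a)$ that corresponds to $\ell$ being satisfied ($x_a^{\mathrm T}$ if $\ell = x_a$, otherwise $x_a^{\mathrm F}$) and to \emph{both} cells of each of $C(x_b)$ and $C(x_c)$. Every such edge respects an edge of $\Gcross$, and the whole construction is computable in polynomial time.

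I would then prove correctness as follows. A skeleton $V'$ selects one cell from every cluster; its restriction to the variable clusters encodes a truth assignment $\tau$ (with $\tau(x_i)$ true iff $x_i^{\mathrm T} \in V'$), and in each clause cluster $C(\gamma_j)$ it singles out one literal $\ell_j$ of $\gamma_j$. By the observation from the membership argument, $V'$ is a skeleton iff, for every clause $\gamma_j$ on variables $x_a, x_b, x_c$, the chosen cell of $C(\gamma_j)$ is adjacent in $\Gcell$ to the chosen cells of all three of $C(x_a), C(x_b), C(x_c)$. If $\ell_j$ is on $x_a$, adjacency to the chosen cells of $C(x_b)$ and $C(x_c)$ is automatic, since the cell of $\ell_j$ is joined to both cells of those clusters; adjacency to the chosen cell of $C(x_a)$ holds iff that chosen cell is precisely the one marking $\ell_j$ satisfied, i.e.\ iff $\tau$ satisfies $\ell_j$. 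Hence a skeleton exists iff some truth assignment satisfies at least one literal per clause, which is exactly satisfiability of $\psi$.

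I do not expect a deep obstacle. The point I would be most careful about is the equivalence ``$\Gcell[V']$ is isomorphic to $\Gcross$ iff the chosen cells realize every edge of $\Gcross$'', since one must exclude an isomorphism that does not respect the cluster-to-vertex correspondence; this is settled by the edge-counting argument above, which exploits that $\Gcell[V']$ is forced to embed as a spanning subgraph of $\Gcross$. The only other minor issue is justifying the normal form of \textsc{3Sat} used, which is standard.
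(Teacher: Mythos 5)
Your proposal is correct and follows essentially the same route as the paper: the identical \textsc{3Sat} reduction with two cells per variable cluster, three cells per clause cluster, and each literal cell joined to its ``satisfying'' variable cell plus all four cells of the other two variables. Your explicit edge-counting argument that a spanning subgraph of $\Gcross$ is isomorphic to $\Gcross$ only if it realizes every edge is a welcome extra detail that the paper leaves implicit.
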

\begin{proof}
The problem is clearly in \NP. The proof of the hardness is by reduction from \textsc{3Sat}. For a boolean \textsc{3Sat} formula $\psi$  create a cluster $C(x)$ for each variable $x$ in $\psi$ and a cluster $C(\gamma)$ for each clause $\gamma$ of $\psi$. In $\Gcross$ each clause cluster is adjacent to the three clusters of the variables occurring in the clause.  Each variable cluster $C(x)$ consists of two cells in $\Gcell$, one for the positive literal $x$ and one for its negation $\neg{x}$. Also, each clause cluster $C(\gamma)$ contains three cells, one for each literal. Finally, connect each literal cell of a clause $\lambda$ to the corresponding cell of its variable cluster and to all four cells of the other two variables of $\gamma$.

We now show that $\psi$ has a satisfying truth assignment iff there exists a subset of skeleton vertices in $\Gcell$ that induces a subgraph isomorphic to $\Gcross$. Assume that we know a satisfying variable assignment of $\psi$. We select the ``\emph{true}'' cells of the variable clusters and one satisfied literal of each clause. The induced subgraph of this set of cells is isomorphic to $\Gcross$ as the satisfied literal cell of each clause $\gamma$ covers all three edges of $C(\gamma)$ to its three adjacent variable clusters. Conversely, if we have a subset of skeleton vertices of $\Gcell$ that induce a subgraph isomorphic to $\Gcell$, then setting the literals of the set of selected literal cells to \emph{true} satisfies $\psi$. Otherwise, in an unsatisfied clause, none of the three clause cells would connect to the selected cells of all three adjacent variable clusters, which contradicts the skeleton property.
\end{proof}
\end{backInTime}

\begin{figure}[t]
\centering
\subfigure[]{\label{fi:strip-model-proof-a}\includegraphics[scale=0.4]{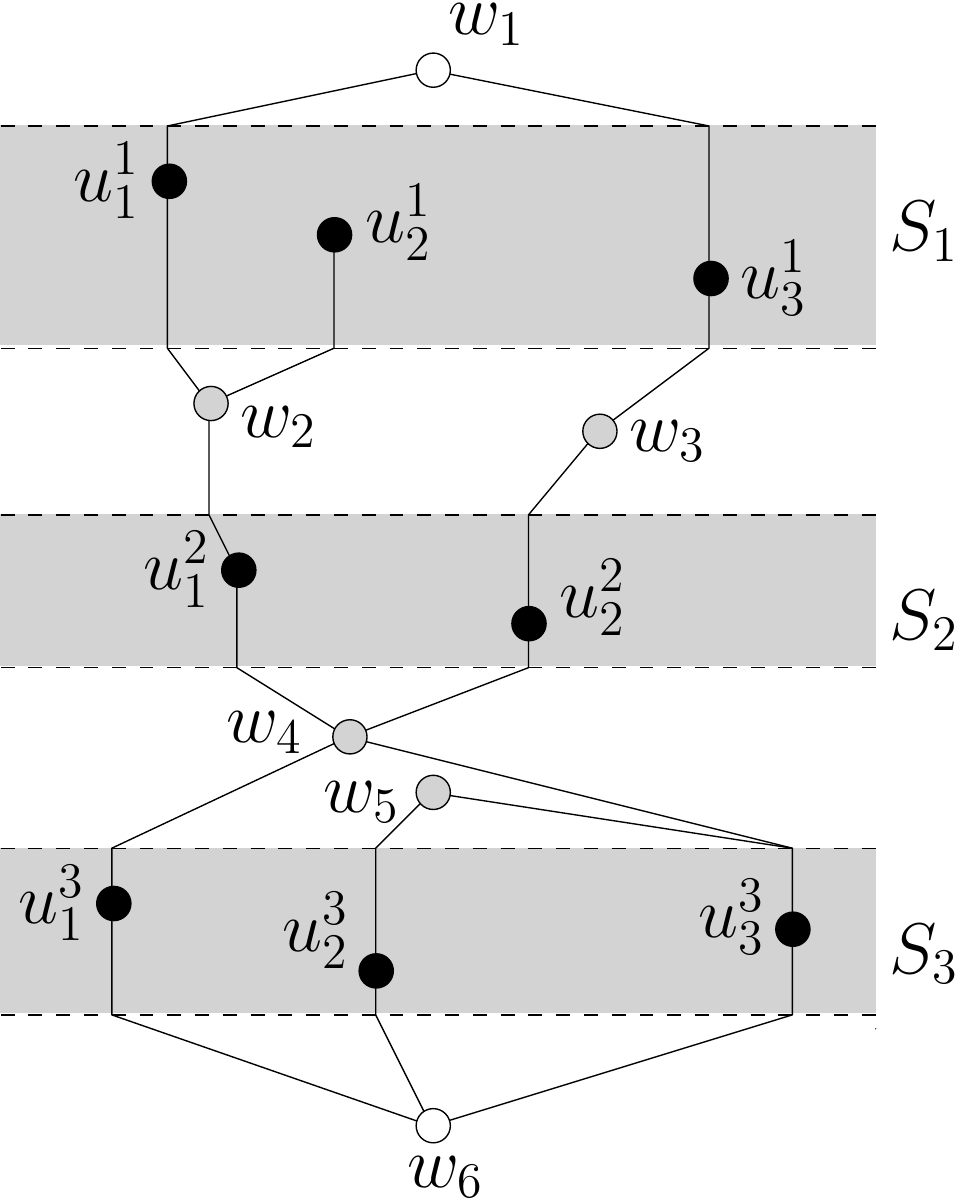}}
\hfill
\subfigure[]{\label{fi:strip-model-proof-b}\includegraphics[scale=0.45]{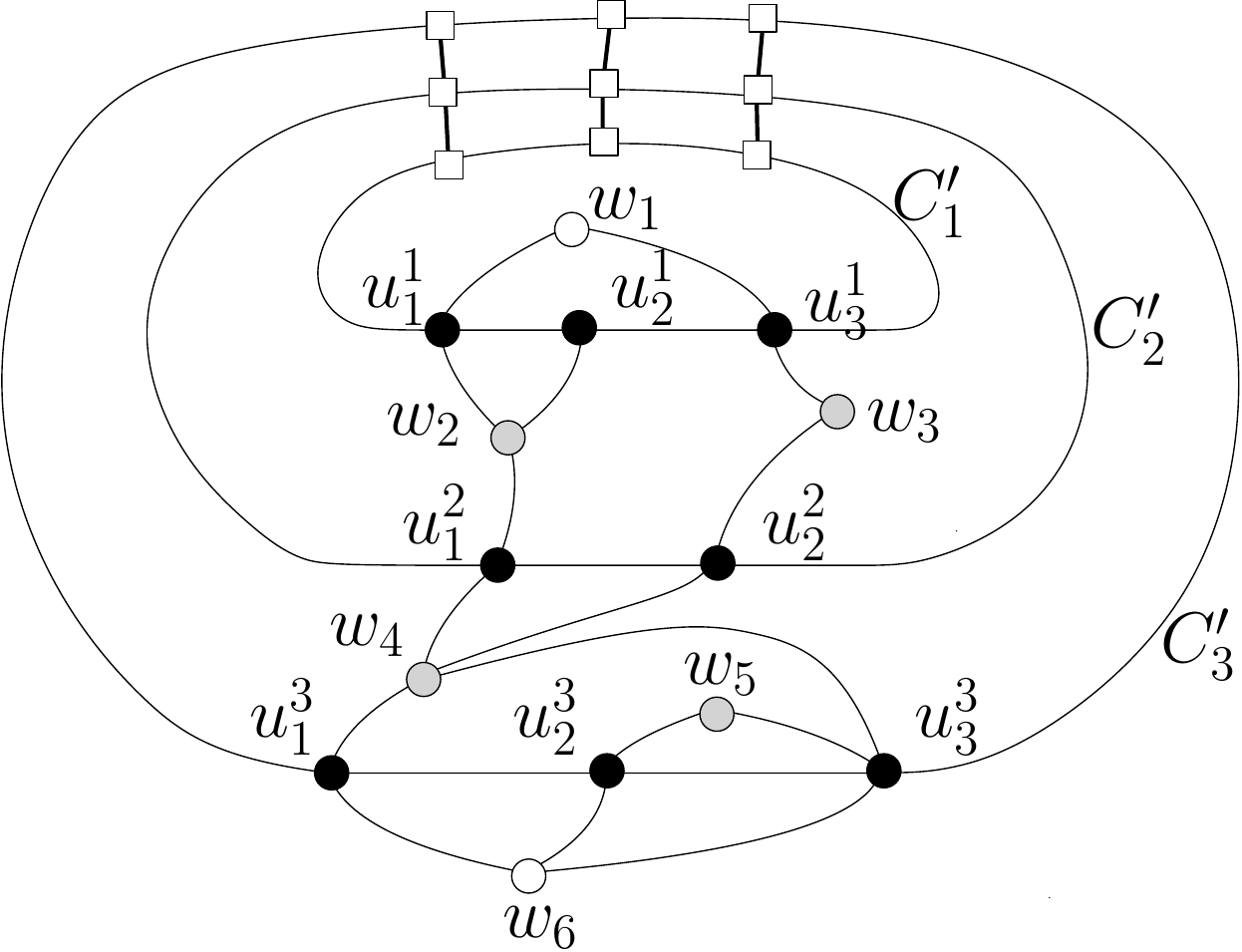}}
\caption{Illustration for the proof of Theorem~\ref{th:1-bend-strip-model}: (a) A planar 1-bend drawing in the strip model; (b) The corresponding planar embedding of the graph $G'$.}\label{fi:strip-model-proof}
\end{figure}

\begin{backInTime}{onebendstripmodel}
\begin{theorem}\label{th:1-bend-strip-model}
	\onebendstripmodel
\end{theorem}
\begin{proof}
Call \emph{black} a fixed vertex, \emph{white} a mobile vertex with all neighbors in the same strip, and \emph{gray} the remaining vertices. A gray vertex with neighbors in two consecutive strips must lie between them, while each white vertex can lie either above or below the strip of its (black) neighbors. If a grey vertex has neighbors that are not in two consecutive strips, the instance is immediately rejected.

Let $\mathcal{S} = \{S_1, \dots, S_h\}$ be the sequence of strips. For each strip $S_i \in \mathcal{S}$, let $V_f^i = \{u_1^i, \dots, u_{r_i}^i\}$ be the left-to-right sequence of black vertices inside $S_i$. Also, let $V_m^i$ be the set of mobile vertices connected to some vertex of $V_f^i$. Arranging the vertices of $V_m^i$ above or below $S_i$ so to avoid crossings between their incident edges equals to assigning each of them either above or below the half-plane determined by a horizontal line that contains $u_1^i, \dots, u_{r_i}^i$, in this left-to-right order. Hence, testing if a $1$-bend drawing within $\mathcal{S}$ exists generalizes testing the existence of a $0$-bend drawing when all fixed vertices are collinear. As in Theorem~\ref{th:collinear}, this problem is reduced to testing planarity of a graph $G'$ suitably defined by augmenting $G$. Namely, for each $S_i$, add a cycle $C_i$ connecting all edges of $V_f^i$ in their left-to-right order; then, subdivide edge $(u_1^i, u_{r_i}^i)$ of $C_i$ with three dummy vertices $v_1^i, v_2^i, v_3^i$, in this order from $u_1^i$ to $u_{r_i}^i$, and call $C'_i$ the subdivision of $C_i$; finally, for each $i = 1, \dots, h-1$ and $j = 1,2,3$, connect $v_j^i$ to $v_j^{i+1}$. See Fig.~\ref{fi:strip-model-proof} for an illustration; in the figure, the dummy vertices $v_j^i$ are represented by small squares.

Denote by $G''$ the graph obtained from $G'$ by removing all mobile vertices. It is immediate to see that $G'$ is planar if and only if it has a planar embedding such that $C'_h$ coincides with the external face of $G''$. In such an embedding for $G'$, the edges $(v_j^i, v_j^{i+1})$ force all cycles to be nested in such a way that $C'_i$ is inside $C'_{i+1}$ ($i \in \{1, \dots, h-1\}$). A mobile vertex $w$ between $C'_i$ and $C'_{i+1}$ corresponds to placing $w$ above $S_{i+1}$ and below $S_i$. If $w$ is in the outer face of $G'$ then $w$ is below $S_h$, and if $w$ is inside $C'_1$ then $w$ is above $S_1$. Since the size of $G'$ is linear in the size of $G$ and graph planarity testing is linear-time solvable, the statement holds.
\end{proof}
\end{backInTime}
\end{document}